\newif\ifIEEE
\newif\ifISIT
\IEEEfalse     
\IEEEtrue
\ISITfalse

\ifISIT
    \IEEEtrue
\fi
\ifIEEE
    \ifISIT
        \documentclass[conference,final,letterpaper]{IEEEtran}
    \else
        \documentclass[journal,final,letterpaper]{IEEEtran}
    \fi
    \newcommand{\bibauthor}[1]{#1}
    \newcommand{\bibpaper}[1]{``#1''}
    \newcommand{\Footnotetext}[2]
    {
        \begin{figure}[!b]
        \footnotesize\vspace{-3ex}\hrulefill\hfill
        \makebox[0em]{}\hfill\makebox[0em]{}%
                                          \par${}^{#1}$ #2\vspace{-.6ex}
        \end{figure}
        \addtocounter{figure}{0}
     }
\else
    \documentclass[12pt]{article}
    \newcommand{\bibauthor}[1]{\textsc{#1}}
    \newcommand{\bibpaper}[1]{\textsl{#1}}
    \newenvironment{IEEEkeywords}{\begin{small}%
                                  \textbf{Index Terms} ---}{\end{small}}
\fi
\usepackage{ifthen}
\usepackage{amsfonts,bm,bbm}
\usepackage{amsthm,amsmath,amssymb}
\usepackage{pict2e}
\usepackage{color}
\usepackage{float}
\newcommand{\bibbook}[1]{\textit{#1}}

\newcommand{\bibperiodical}[1]{\textit{#1}}

\ifISIT
\fi
\newtheorem{theorem}{\indent Theorem}

\newtheorem{lemma}[theorem]{\indent Lemma}
\newtheorem{corollary}[theorem]{\indent Corollary}

\theoremstyle{remark}
\newtheorem{remark}{\indent Remark}
\theoremstyle{definition}
\newtheorem{example}{\indent Example}
\renewcommand{\mathbf}[1]{{\bm{#1}}}     
\newenvironment{myalgorithm}{%
        \begin{minipage}{\columnwidth}
        \makebox[0ex]{}\hrulefill\makebox[0ex]{}\\*%
             }{%
             \vspace{-1ex}%
             \makebox[0ex]{}\hrulefill\makebox[0ex]{}\end{minipage}}
\newcommand{\bldzero}{{\mathbf{0}}}
\newcommand{\bldone}{{\mathbf{1}}}
\newcommand{\blda}{{\mathbf{a}}}
\newcommand{\bldb}{{\mathbf{b}}}
\newcommand{\bldc}{{\mathbf{c}}}
\newcommand{\blde}{{\mathbf{e}}}
\newcommand{\bldg}{{\mathbf{g}}}
\newcommand{\bldh}{{\mathbf{h}}}
\newcommand{\bldu}{{\mathbf{u}}}
\newcommand{\bldv}{{\mathbf{v}}}
\newcommand{\bldx}{{\mathbf{x}}}
\newcommand{\bldy}{{\mathbf{y}}}
\newcommand{\bldz}{{\mathbf{z}}}
\newcommand{\Int}[1]{{\left[{#1}\right\rangle}}
\newcommand{\p}[1]{{\mathrm{#1}}}
\newcommand{\Field}{{\mathbb{F}}}
\newcommand{\Integers}{{\mathbb{Z}}}
\newcommand{\Realfield}{{\mathbb{R}}}
\newcommand{\Sphere}{{\mathbb{S}}}
\newcommand{\Height}{{\mathsf{h}}}
\newcommand{\rank}{{\mathrm{rank}}}
\newcommand{\diag}{{\mathrm{diag}}}
\newcommand{\code}{{\mathcal{C}}}
\renewcommand{\Subset}{{\mathcal{S}}}
\newcommand{\Support}{{\mathsf{Supp}}}
\newcommand{\sgn}{{\mathrm{sgn}}}
\newcommand{\weight}{{\mathsf{w}}}
\renewcommand{\AA}{{\mathcal{A}}}
\newcommand{\EE}{{\mathcal{E}}}
\newcommand{\GG}{{\mathcal{G}}}
\newcommand{\II}{{\mathcal{I}}}
\newcommand{\JJ}{{\mathcal{J}}}
\newcommand{\PP}{{\mathcal{P}}}
\newcommand{\TT}{{\mathcal{T}}}
\newcommand{\XX}{{\mathcal{X}}}
\newcommand{\ico}{{\text{\normalfont ico}}}
\newcommand{\dod}{{\text{\normalfont dod}}}
\newcommand{\rut}{{\text{\normalfont Rut}}}

\newcommand{\Title}{On the Height Profile of Analog
            Error-Correcting Codes}
\newcommand{\Namea}{Ron M. Roth}
\newcommand{\Nameb}{Ziyuan Zhu}
\newcommand{\Namec}{Changcheng Yuan}
\newcommand{\Named}{Paul H. Siegel}
\newcommand{\Namee}{Anxiao Jiang}
\newcommand{\Addressa}{%
             Computer Science Department, Technion, Haifa, Israel}
\newcommand{\Addressb}{%
             Department of Electrical and Computer Engineering,
             University of California San Diego, La Jolla, CA, USA}
\newcommand{\Addressc}{%
             Department of Computer Science and Engineering,
             Texas A\&M University, College Station, TX, USA}
\newcommand{\Addressd}{\Addressb}
\newcommand{\Addresse}{\Addressc}
\newcommand{\Emaila}{ronny@cs.technion.ac.il}
\newcommand{\Emailb}{ziz050@ucsd.edu}
\newcommand{\Emailc}{ericycc@tamu.edu}
\newcommand{\Emaild}{psiegel@ucsd.edu}
\newcommand{\Emaile}{ajiang@tamu.edu}
\newcommand{\Commenta}{The work of R.M. Roth was done in part while
                    visiting the Center for Memory and Recording
                    Research, UC San Diego}
\newcommand{\Grant}{This work was supported in part by
                    NSF Grants Nos.~CCF-2212437,
                    CCF-2416361, and CCF-2416362,
                    by Grant No.~2023786 from
                    the United-States--Israel Binational
                    Science Foundation (BSF),
                    and by Grant No.~2369/24 from
                    the Israel Science Foundation}
\newcommand{\Thnxa}{\Namea\ is with the \Addressa.
                    \Commenta\ (email: \Emaila).}
\newcommand{\Thnxb}{\Nameb\ is with the \Addressb\ (email: \Emailb).}
\newcommand{\Thnxc}{\Namec\ is with the \Addressc\ (email: \Emailc).}
\newcommand{\Thnxd}{\Named\ is with the \Addressd\ (email: \Emaild).}
\newcommand{\Thnxe}{\Namee\ is with the \Addresse\ (email: \Emaile).}

\begin{document}
\ifIEEE
    \title{\Title}
        \ifISIT
            \author{%
                \IEEEauthorblockN{%
                    \Namea\IEEEauthorrefmark{1},
                    \Nameb\IEEEauthorrefmark{2},
                    \Namec\IEEEauthorrefmark{3},
                    \Named\IEEEauthorrefmark{2},
                    \Namee\IEEEauthorrefmark{3}}
                \IEEEauthorblockA{\footnotesize
                    \IEEEauthorrefmark{1}\Addressa\\
                    \IEEEauthorrefmark{2}\Addressb\\
                    \IEEEauthorrefmark{3}\Addressc\\
                    Emails: \Emaila, \Emailb, \Emailc, \Emaild, \Emaile}
            }
        \else
            \author{\Namea \quad\quad
                    \Nameb \quad\quad
                    \Namec \quad\quad
                    \Named \quad\quad
                    \Namee
                    \thanks{\Grant.}
                    \thanks{\Thnxa}
                    \thanks{\Thnxb}
                    \thanks{\Thnxc}
                    \thanks{\Thnxd}
                    \thanks{\Thnxe}
            }
        \fi
\else
    \title{\textbf{\Title}\thanks{\Grant.}}
    \author{\textsc{\Namea}\thanks{\Thnxa}
    \and
            \textsc{\Nameb}\thanks{\Thnxb}
    \and
            \textsc{\Namec}\thanks{\Thnxc}
    \and
            \textsc{\Named}\thanks{\Thnxd}
    \and
            \textsc{\Namee}\thanks{\Thnxe}
    }
\fi

\maketitle

\begin{abstract}
In recent work, it has been shown that
maintaining reliability in analog vector--matrix multipliers
can be modeled as the following coding problem.
Vectors in $\Realfield^k$ are encoded into codewords of
a linear $[n,k,d]$ code $\code$ over $\Realfield$.
For prescribed positive reals $\delta < \Delta$,
additive errors of magnitude${} \le \delta$
are tolerable and need no handling,
yet outlying errors of magnitude${} > \Delta$
are to be located or detected.
The trade-off between the ratio $\Delta/\delta$
and the number of outlying errors that can be handled is determined
by the \emph{height profile} of $\code$; as such,
the height profile provides a finer description
of the error handling capability of $\code$,
compared to the minimum distance $d$, which only determines
the number of correctable errors.
This work contains a further study of the notion of the height profile.
Several characterizations of the height profile are presented,
thereby yielding methods for computing it.
\ifISIT
\else
The starting point is formulating this computation
as an optimization problem that is solved by a set of linear programs.
This, in turn, leads to a combinatorial characterization of
the height profile as a maximum (or max--min) over
a certain finite set of codewords of $\code$.
Moreover, this characterization is shown to have
a simple geometric interpretation when the columns of
the generator matrix of $\code$ all have the same $L_2$ norm.
Through examples of several code families, it is demonstrated
how the results herein can be used to compute the height profile
explicitly.
\fi
\end{abstract}

\ifISIT
    \Footnotetext{\quad}{\Grant. \Commenta.}
\else
\begin{IEEEkeywords}
Fault-tolerant computing,
least absolute deviations,
linear codes over the real field,
linear programming,
Platonic solids,
spherical codes,
unit-norm tight frame,
vector--matrix multiplication.
\end{IEEEkeywords}
\fi

\section{Introduction}
\label{sec:introduction}

Vector--matrix multiplication (VMM) is a very common computational task
found in many applications, and there are quite a few
hardware implementations of analog VMM accelerators, particularly
designs that are based on memristive crossbars~%
\cite{AHCNFWFHSRM},%
\cite{HSKGDGLGWY},%
\cite{HGLLGMDJWYXS},%
\cite{SNMBSHWS},%
\cite{TANSB}.
Such devices, however, are susceptible to various types of errors.
As shown in recent work~\cite{Roth1},\cite{Roth2},
maintaining reliability in such devices
can be modeled as the following coding problem.
Blocks in $\Realfield^k$ are mapped one-to-one to codewords
of a linear $[n,k,d]$ code $\code$ over $\Realfield$.
For prescribed positive reals $\delta < \Delta$,
additive errors whose values are within the interval
$[-\delta,\delta]$ are tolerable and need no handling;
on the other hand, outlying errors, whose values are
outside the interval $[-\Delta,\Delta]$, are to be located or detected.
The correction capability of the code $\code$ is determined by
the following parameters:
(i)~the number $\tau$ of outlying errors that can be located,
(ii)~the number $\tau + \sigma$ of outlying errors that can be detected,
and (iii)~the ratio $\Delta/\delta$ (which should be as small
as possible). In the classical coding setting,
the inequality $2 \tau + \sigma < d$ is necessary and sufficient
in order to correct any $\tau$ errors and detect
any $\tau + \sigma$ errors. Herein, a finer parametrization of $\code$
is required that also determines the smallest attainable ratio
$\Delta/\delta$; this is done by means of
the \emph{height profile} of the code, which is a list
of positive (possibly infinite) real values $(\Height_m(\code))_m$,
whose precise definition will be recalled below.
In terms of the height profile, we then get
the following necessary and sufficient relationship
between the parameters $\tau$, $\sigma$, and $\Delta/\delta$:
\[
\Delta/\delta \ge 2 (\Height_{2\tau + \sigma}(\code) + 1) .
\]
The aim of this work is providing several characterizations
of the height profile of a linear code $\code$.

We next introduce some notation and definitions.
For integers $\ell \le n$, we denote by $\Int{\ell:n}$
the integer subset
$\left\{ z \in \Integers \,:\, \ell \le z < n \right\}$.
When $\ell = 0$ we use the shorthand notation $\Int{n}$,
and we will typically use
$\Int{n}$ to index the entries of vectors in $\Realfield^n$.
Similarly, the entries of an $r \times n$ matrix $H = (H_{i,j})$ will be
indexed by $(i,j) \in \Int{r} \times \Int{n}$.
The $i$th row of $H$ will be denoted by $[H]_i$,
and for a subset $\II \subseteq \Int{n}$, the notation $(H)_\II$
will stand for the $r \times |\II|$ submatrix of $H$
that is formed by the columns that are indexed by $\II$
(this notation extends to vectors as well, seen as single-row matrices).
We denote the support of a real vector $\bldx$ by $\Support(\bldx)$
and its $L_p$ norm by $\| \bldx \|_p$,
with $\| \bldx \|_\infty$ standing for the largest magnitude
(i.e., absolute value) of any entry of $\bldx$.
The unit (Euclidean) hypersphere in $\Realfield^k$, namely, the set
$\left\{ \bldx \in \Realfield^k \,:\, \| \bldx \|_2 = 1 \right\}$,
will be denoted by $\Sphere^{k-1}$.

Given a nonzero vector $\bldx = (x_j)_{j \in \Int{n}} \in \Realfield^n$,
a \emph{sorting permutation} of $\bldx$ is a permutation
$\pi : \Int{n} \rightarrow \Int{n}$ that
\ifISIT
    sorts
\else
sorts\footnote{%
A sorting permutation is not unique if $\bldx$ has multiple entries
with the same magnitude, yet this will not pose
any impediment in the sequel.}
\fi
the entries of $\bldx$ according to descending magnitudes:
\[
|x_{\pi(0)}| \ge
|x_{\pi(1)}| \ge \ldots \ge |x_{\pi(n-1)}| .
\]
Given an integer $m \in \Int{n}$,
the \emph{$m$-height} of $\bldx$,
denoted $\Height_m(\bldx)$, is
\ifISIT
    the ratio $|x_{\pi(0)}|/|x_{\pi(m)}|$.
    We also define the set
\else
defined by
\[
\Height_m(\bldx) = \left| \frac{x_{\pi(0)}}{x_{\pi(m)}} \right|
\]
(where $\Height_m(\bldx) = \infty$ when $x_{\pi(m)} = 0$).
We also define the set\footnote{%
Note that $\PP_m(\bldx)$ is uniquely defined, even when $\bldx$ has
several sorting permutations.}
\fi
\[
\PP_m(\bldx)
= \left\{ j \in \Int{n} \,:\, |x_j| = |x_{\pi(m)}| \right\} .
\]

The $m$-height of a linear $[n,k,d]$
\ifISIT
    code
\else
code\footnote{%
Hereafter, we always assume that the code is nontrivial, i.e., $k > 0$.}
\fi
$\code$ over $\Realfield$
is defined by
\[
\Height_m(\code) = \max_{\bldc \in \code \setminus \{ \bldzero \}}
\, \Height_m(\bldc) ,
\]
and a nonzero codeword of $\code$ is called \emph{$m$-extremal}
if its $m$-height equals $\Height_m(\code)$.
The list $(\Height_m(\code))_{m \in \Int{n}}$ is called
the \emph{height profile} of $\code$.
The minimum distance $d$ of $\code$ is related to its height profile by
\begin{equation}
\label{eq:height-distance}
d = \min \Bigl\{ m \in \Int{n{+}1} \,:\, \Height_m(\code)=\infty \Bigr\}
\end{equation}
(where we formally define $\Height_n(\code) = \infty$).
\ifISIT
\else
The minimum-weight codewords of $\code$ are its codewords $\bldc$
whose Hamming weight, $\weight(\bldc)$, equals $d$.
\fi

For a subset $\II \subseteq \Int{n}$ we let
$\overline{\II}$ be the complement set
$\Int{n} \setminus \II$. The ambient set $\Int{n}$ will always be
the index set of the coordinates of a linear code
and will always be understood from the context.

Given a linear $[n,k]$ code $\code$ over $\Realfield$
and a subset $\II \subseteq \Int{n}$,
we let $(\code)_\II$ denote
\ifISIT
    the punctured code
\else
the linear code that is obtained
by puncturing $\code$ on $\overline{\II}$:
\fi
\[
(\code)_\II =\Bigl\{ (\bldc)_\II \,:\, \bldc \in \code \Bigr\} .
\]
We also define $\code\vert_\II$ to be the set of all codewords
$\bldc \in \code$ whose projections $(\bldc)_\II$
are in the hypercube $[-1,1]^{|\II|}$, i.e.,
\begin{equation}
\ifISIT
    \nonumber
\else
\label{eq:codesubset}
\fi
\code\vert_\II = \Bigl\{
\bldc = (c_j)_{j \in \Int{n}} \in \code \,:\,
\textrm{$|c_j| \le 1$ for all $j \in \II$}
\Bigr\} .
\end{equation}

As said earlier, our goal in this work is to find several
characterizations of the height profile of a linear code,
thereby presenting methods for calculating it.
Now, the relationship~(\ref{eq:height-distance}) implies that computing
the height profile of a generic linear code is at least as hard as
computing its minimum distance; still, since the codes of interest
are typically very structured, our characterizations
provide tools for analyzing the height profile of such codes
(as we demonstrate through several examples).

\ifISIT
\else
Our starting point will be casting the problem of computing
the $m$-height of a linear code as a maximization over the outputs of
a set of linear programs (Section~\ref{sec:LP}).
This approach was recently used in~\cite{Jiang} when searching
for codes with favorable height profiles.\footnote{%
Tables such as~\cite[Table~III]{Jiang} and~\cite{YJRZS}
can be used as a yardstick
to determine how good the height profile is, as we currently do not know
of negative results (i.e., general lower bounds) on the entries of
the height profile.}
The algorithm that we present here
is simpler; moreover, through the dual programs, we obtain
a necessary condition on the structure of $m$-extremal codewords
(this condition will turn out to be useful in subsequent sections).

In Section~\ref{sec:combinatorial},
we present several characterizations of
the $m$-height of a linear code as
a maximum (or max--min) over a certain \emph{finite} set of codewords.
These characterizations, in turn, allow us to obtain exact expressions
for the height profile of several (structured) family of codes.

In the context of searching for codes with
favorable height profiles, almost all constructions so far
in the literature are codes with parity-check matrices whose columns are
all unit vectors. In Section~\ref{sec:spherical}, we focus on such
codes and their duals. In particular, for the dual codes,
we show that our condition on the structure of $m$-extremal codewords
has a particularly simple geometric interpretation.
This, in turn, allows us to completely characterize
the $m$-extremal codewords in several cases of interest.
\fi

\section{Height via linear programming}
\label{sec:LP}

The following result is the first among several characterizations
that we provide for the $m$-height of a linear code.
It will serve as a basis for characterizing
the $m$-height of a linear code as the largest among
the solutions of a certain set of linear programs.
The approach of computing the $m$-height of codes
through linear programming was first introduced
\ifISIT
    in~\cite{Jiang}
    and was used therein for searching for codes
    with favorable height profiles (see~\cite[Table~III]{Jiang}).
    The
\else
in~\cite{Jiang} and the
\fi
algorithm to be presented in Figure~\ref{fig:LP} below
can be seen as a simplified version of the algorithm in~\cite{Jiang}.
Our new algorithm was used to obtain
the improved table in~\cite{YJRZS}.

\begin{theorem}
\label{thm:LP}
Let $\code$ be a linear $[n,k,d]$ code over $\Realfield$
and let $m \in \Int{d}$. Then
\begin{equation}
\label{eq:LP1}
\Height_m(\code) =
\max_\Subset
\max_{\bldx \in \code \vert_{\overline{\Subset}}} \| \bldx \|_\infty ,
\end{equation}
where $\Subset$ ranges over all $m$-subsets of $\Int{n}$.
\end{theorem}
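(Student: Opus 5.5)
The plan is to prove the two inequalities between $\Height_m(\code)$ and the right-hand side of~(\ref{eq:LP1}) separately. Both directions rest on one observation: the $m$-height is invariant under nonzero scaling, and it compares the largest magnitude with the $(m{+}1)$st largest. We first check that the right-hand side is well defined, i.e., finite for every $m$-subset $\Subset$. If $\bldx \in \code$ vanishes on $\overline{\Subset}$, then $\weight(\bldx) \le m < d$, so $\bldx = \bldzero$. Hence the restriction map $\bldx \mapsto (\bldx)_{\overline{\Subset}}$ is injective on $\code$. Therefore $\code\vert_{\overline{\Subset}}$ is a compact convex set, namely the preimage of a cube under an injective linear map. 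The maximum of the continuous function $\|\cdot\|_\infty$ over it is attained and finite. This is consistent with $\Height_m(\code) < \infty$ for $m < d$, by~(\ref{eq:height-distance}).

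\emph{Lower bound on $\Height_m(\code)$.} Fix $\Subset$ and a nonzero $\bldx \in \code\vert_{\overline{\Subset}}$. Since $|\Subset| = m$, at least $n-m$ coordinates of $\bldx$ lie in $\overline{\Subset}$ and have magnitude${}\le 1$. Thus the $(m{+}1)$st largest magnitude satisfies $|x_{\pi(m)}| \le 1$. It is also nonzero, because $\weight(\bldx) \ge d > m$. Therefore
\[
\Height_m(\bldx) = \frac{\|\bldx\|_\infty}{|x_{\pi(m)}|} \ge \|\bldx\|_\infty ,
\]
and the right-hand side of~(\ref{eq:LP1}) is at most $\Height_m(\code)$. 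Note that $\bldx = \bldzero$ contributes $0$, and $\Height_m(\code) \ge 1$, so the zero vector is harmless.

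\emph{Upper bound on $\Height_m(\code)$.} Take an $m$-extremal codeword $\bldc$ (or any nonzero $\bldc$, then take the supremum) with sorting permutation $\pi$. As noted, $c_{\pi(m)} \neq 0$. Set $\bldx = \bldc/|c_{\pi(m)}|$ and $\Subset = \{\pi(0),\ldots,\pi(m-1)\}$. Every $j \in \overline{\Subset}$ satisfies $|c_j| \le |c_{\pi(m)}|$, so $\bldx \in \code\vert_{\overline{\Subset}}$. Moreover $\|\bldx\|_\infty = |c_{\pi(0)}|/|c_{\pi(m)}| = \Height_m(\bldc)$. Taking the supremum over $\bldc$ gives $\Height_m(\code) \le \max_\Subset \max_{\bldx} \|\bldx\|_\infty$.

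The only delicate point is attainment. We want the maxima in~(\ref{eq:LP1}) to be genuine maxima, and the identity to hold even if the supremum defining $\Height_m(\code)$ were a priori not attained. The compactness argument in the first paragraph handles the inner maxima, and the outer maximum is over finitely many $\Subset$. Combining the two inequalities then also shows that $\Height_m(\code)$ is attained as a maximum, so the statement holds as written.
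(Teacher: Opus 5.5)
Your proof is correct and follows essentially the same two-inequality argument as the paper. For the upper bound you normalize a codeword so that $|c_{\pi(m)}|=1$ and take $\Subset$ to be the positions of its $m$ largest entries; for the lower bound you use that membership in $\code\vert_{\overline{\Subset}}$ forces $|c_{\pi(m)}|\le 1$. Your additional justifications are minor refinements of points the paper takes for granted: you prove compactness of $\code\vert_{\overline{\Subset}}$ via injectivity of puncturing when $m<d$, and you work with an arbitrary nonzero codeword, so the existence of an $m$-extremal codeword follows as a consequence rather than being assumed.
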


\begin{proof}
We first show that $\Height_m(\code)$ is bounded from above
by the right-hand side of~(\ref{eq:LP1}).
Let $\bldc = (c_j)_{j \in \Int{n}}$ be an $m$-extremal codeword
in $\code$ and let $\pi(\cdot)$ be a sorting permutation of $\bldc$;
without loss of generality we can assume that
$|c_{\pi(m)}| = 1$, in which case
\ifISIT
    $\Height_m(\bldc) = \| \bldc \|_\infty$.
\else
\[
\Height_m(\bldc) = \| \bldc \|_\infty .
\]
\fi
Since there are no more than $m$ entries in $\bldc$ of magnitude
greater than $1$, there exists at least
one $m$-subset $\Subset \subseteq \Int{n}$ such that
$\bldc \in \code\vert_{\overline{\Subset}}$. Hence,
\ifISIT
    \[
    \Height_m(\code) = \Height_m(\bldc) = \| \bldc \|_\infty
    = \max_\Subset
    \max_{\bldx \in \code\vert_{\overline{\Subset}}}\| \bldx \|_\infty .
    \]
\else
\begin{eqnarray*}
\Height_m(\code)
& = &
\Height_m(\bldc) = \| \bldc \|_\infty \\
& \le &
\max_\Subset
\max_{\bldx \in \code\vert_{\overline{\Subset}}} \| \bldx \|_\infty .
\end{eqnarray*}
\fi

Next, we show that $\Height_m(\code)$ is at least
the right-hand side of~(\ref{eq:LP1}).
Let $\Subset$ and $\bldc \in \code\vert_{\overline{\Subset}}$
be where the maximum in~(\ref{eq:LP1}) is
\ifISIT
    attained.
\else
attained (since $\bldx \mapsto \| \bldx \|_\infty$
is continuous and is maximized in~(\ref{eq:LP1}) over a compact set,
the maximum indeed exists).
\fi
Also, let $\pi(\cdot)$ be a sorting permutation of $\bldc$.
Since $\bldc \in \code\vert_{\overline{\Subset}}$,
there are no more than $m$ entries in $\bldc$ of magnitude
greater than $1$; in particular, $|c_{\pi(m)}| \le 1$.
Hence,
    \ifISIT
    \[
    \Height_m(\code)
    \ge
    \Height_m(\bldc) = \left| \frac{c_{\pi(0)}}{c_{\pi(m)}} \right|
    \ge \| \bldc \|_\infty
    = \max_\Subset
    \max_{\bldx \in \code\vert_{\overline{\Subset}}}\| \bldx \|_\infty .
    \]
\else
\begin{eqnarray*}
\Height_m(\code)
& \ge &
\Height_m(\bldc) = \left| \frac{c_{\pi(0)}}{c_{\pi(m)}} \right|
\ge |c_{\pi(0)}| = \| \bldc \|_\infty \\
& = &
\max_\Subset
\max_{\bldx \in \code\vert_{\overline{\Subset}}} \| \bldx \|_\infty .
\end{eqnarray*}
\fi
\end{proof}

\ifISIT
\else
\begin{remark}
\label{rem:d>=m}
Theorem~\ref{thm:LP} holds also when $m \ge d$.
By~(\ref{eq:height-distance}),
the left-hand side of~(\ref{eq:LP1}) equals $\infty$ in this case.
As for the right-hand side, let $\bldc$ be any minimum-weight
codeword of $\code$ and let $\Subset$ be any $m$-subset of
$\Int{n}$ that is a superset of $\Support(\bldc)$.
Since all the scalar multiples of $\bldc$
belong to $\code\vert_{\overline{\Subset}}$, the inner maximum
in~(\ref{eq:LP1}) for this $\Subset$ is unbounded.\qed
\end{remark}

\subsection{Primal linear program}
\label{sec:LP-primal}
\fi

Theorem~\ref{thm:LP} implies
the algorithm in Figure~\ref{fig:LP} for computing the $m$-height of
a linear $[n,k,d]$ code $\code$ which is specified
by its $k \times n$ generator matrix
$G = (\bldg_0 \; \bldg_1 \; \ldots \; \bldg_{n-1})$.
The algorithm enumerates over
all $m$-subsets $\Subset \subseteq \Int{n}$ and over all indexes
$i \in \Subset$. For each pair $(\Subset,i)$,
the following linear program computes
a codeword $\bldc = (c_j)_{j \in \Int{n}} = \bldu G$
in $\code\vert_{\overline{\Subset}}$ for which the entry $c_i$
is maximized:
\begin{equation}
\label{eq:LP2}
\fbox{%
\ifISIT
    \small%
\fi
$%
\renewcommand{\arraystretch}{1.2}
\begin{array}{ll}
\textrm{Maximize} & \\
&
\bldu \cdot \bldg_i \\
\multicolumn{2}{l}{%
\textrm{over $\bldu \in \Realfield^k$, subject to}} \\
&
\left\{
\renewcommand{\arraystretch}{1}
\begin{array}{lcr}
\bldu \, (G)_{\overline{\Subset}} & \le & \bldone \\
\bldu \, (G)_{\overline{\Subset}} & \ge & -\bldone
\end{array}
\right.
\end{array}
$}
\end{equation}
(the notation $\bldone$ in~(\ref{eq:LP2}) stands
for the all-$1$ row vector in $\Realfield^{n-m}$,
and the inequalities therein are componentwise).
The return value of the algorithm is the largest $c_i$ found
across all pairs $(\Subset,i)$.

\begin{figure}[!hbt]
\ifISIT
    \small%
\fi
\centering
\begin{myalgorithm}
\noindent
\textbf{Input:}
\begin{itemize}
\item
$k \times n$ generator matrix
$G = (\bldg_0 \; \bldg_1 \; \ldots \; \bldg_{n-1})$ of $\code$
\item
$m \in \Int{1:d}$
\end{itemize}

\vspace{1ex}

\noindent
$h \leftarrow 0$

\noindent
\textbf{for} every $m$-subset $\Subset \subseteq \Int{n}$ \textbf{do}

\noindent\hspace*{2em}
\textbf{for} $i \in \Subset$ \textbf{do}

\noindent\hspace*{4em}
$c_i \leftarrow \max_{\bldu} \bldu \cdot \bldg_i$
as computed by~(\ref{eq:LP2})

\noindent\hspace*{4em}
$h \leftarrow \max \{ h, c_i \}$

\noindent
\textbf{Output:}
$\Height_m(\code) = h$

\end{myalgorithm}
\caption{Computation of the $m$-height of a linear $[n,k]$ code $\code$
using linear programming.}
\label{fig:LP}
\end{figure}

The algorithm requires solving $m \cdot \binom{n}{m}$ linear programs.
It can be seen as a simplified (and more efficient) version of
the algorithm described in~\cite[\S III.B]{Jiang}.

\ifISIT
\else
\begin{remark}
\label{rem:varyinggenerators}
The algorithm in Figure~\ref{fig:LP} works also when we use
a different generator matrix for each pair $(\Subset,i)$.
Moreover, it will work also if we permute the coordinates of $\code$,
as long as we keep track of the positions that are indexed
by $\Subset$ and $i$ after the permutation.
Now, when $m < d$ we have $n-m \ge n-d+1 \ge k$,
which means that the matrix $(G)_{\overline{\Subset}}$
in~(\ref{eq:LP2}) has full rank $k$ and, so, there is
a $k$-subset $\II \subseteq \overline{\Subset}$ such that
$(G)_\II^{-1} (G)_{\overline{\Subset}}$ contains a $k \times k$
identity matrix. Using $(G)_\II^{-1} G P$ in the inner loop
in Figure~\ref{fig:LP}, where $P$ is an $n \times n$ permutation matrix
that maps $\II$ to $\Int{k}$, results in a modified linear program
where $2k$ out of the $2(n-m)$ constraints in~(\ref{eq:LP2}) become
independent of $(\Subset,i)$ and take the form
\[
\phantom{\begin{array}{l} \\ \qed \end{array}}
\makebox[12.3ex]{}
\left\{
\begin{array}{lcr}
(\bldu)_{\Int{k}}  & \le & \bldone\phantom{.} \\
(\bldu)_{\Int{k}}  & \ge & -\bldone .
\end{array}
\right.
\makebox[12.3ex]{}
\begin{array}{r} \\ \qed \end{array}
\]
\end{remark}

\begin{remark}
\label{rem:LP-augmented}
The linear program~(\ref{eq:LP2}) involves $k$ variables that
satisfy $2(n-m)$ linear inequality constraints.
Its augmented form (as in~\cite[p.~11]{Luenberger}),
which is suitable for the simplex method,
involves $2(n-m)$ nonnegative slack variables that satisfy
$2(n-m)-k = n+r-2m$ linear equality constraints,
where $r = n - k$ is the redundancy of $\code$
(the vector $\bldu$ in~(\ref{eq:LP2})
can be expressed in terms of the slack variables and
therefore can be eliminated from the augmented form).\qed
\end{remark}

\subsection{Dual linear program}
\label{sec:LP-dual}
\fi

An alternate characterization of the $m$-height of a linear code can be
obtained by considering the dual of the linear program~(\ref{eq:LP2})
(see~\cite[Ch.~6]{BS}) and~\cite[Ch.~4]{Luenberger}).

\begin{theorem}
\label{thm:LP-dual}
Let $\code$ be a linear $[n,k,d]$ code over $\Realfield$,
let $G$ be a $k \times n$ generator matrix of $\code$ over $\Realfield$,
and let $m \in \Int{1:d}$. Then
\begin{equation}
\label{eq:LP-dual1}
\Height_m(\code) =
\max_\Subset
\max_{i \in \Subset}
\min_{\blde \in \EE_{\Subset,i}} \| \blde \|_1 ,
\end{equation}
where $\Subset$ is as in Theorem~\ref{thm:LP}
\ifISIT
    and
\else
and\,\footnote{%
The affine space $\EE_{\Subset,i}$ depends only on $\code$, $\Subset$,
and $i$; in particular, it does not depend on the selected
generator matrix $G$.}
\fi
\begin{equation}
\label{eq:EE}
\EE_{\Subset,i}
= \left\{
\blde \in \Realfield^{n-m} \,:\,
(G)_{\overline{\Subset}} \, \blde = \bldg_i
\right\} .
\end{equation}
\end{theorem}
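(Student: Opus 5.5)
Starting from Theorem~\ref{thm:LP} and the algorithm in Figure~\ref{fig:LP}, it suffices to show that for every $m$-subset $\Subset$ and every $i \in \Subset$, the optimal value of the linear program~(\ref{eq:LP2}) equals $\min_{\blde \in \EE_{\Subset,i}} \| \blde \|_1$. Indeed, by Theorem~\ref{thm:LP}, $\Height_m(\code)$ is the maximum over $\Subset$ of $\max_{\bldx \in \code\vert_{\overline{\Subset}}} \|\bldx\|_\infty$. Since $\code\vert_{\overline{\Subset}}$ is symmetric under negation, this inner maximum equals $\max_{i \in \Int{n}} \max_{\bldu} \bldu\cdot\bldg_i$ over feasible $\bldu$. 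Moreover, only $i \in \Subset$ matters: for $j \in \overline{\Subset}$ we have $|c_j| \le 1$, while $\Height_m(\code) \ge 1$. So the maximum over $i \in \Subset$ is the correct quantity. (When $m \ge 1$ this is fine; one small check is that the value for some $i\in\Subset$ is at least $1$, which follows since $\Height_m(\code)\ge 1$.)

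For fixed $(\Subset,i)$, I will write~(\ref{eq:LP2}) as: maximize $\bldu \cdot \bldg_i$ subject to $|\bldu\,\bldg_j| \le 1$ for all $j \in \overline{\Subset}$. The dual is obtained by introducing multipliers $\bldy^+, \bldy^- \ge \bldzero$ in $\Realfield^{n-m}$ for the upper and lower constraints. Since $\bldu$ is free, dual feasibility reads $(G)_{\overline{\Subset}}(\bldy^+ - \bldy^-) = \bldg_i$, and the dual objective is to minimize $\bldone\cdot(\bldy^+ + \bldy^-)$. Setting $\blde = \bldy^+ - \bldy^-$, any feasible $\blde$ gives the dual value $\|\blde\|_1$ at best, by taking $\bldy^\pm$ as the positive and negative parts of $\blde$. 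Thus the dual optimum is exactly $\min_{\blde \in \EE_{\Subset,i}} \|\blde\|_1$.

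It remains to invoke strong LP duality, which requires that both programs be feasible with finite optimum. The primal is feasible because $\bldu=\bldzero$ satisfies the constraints. It is bounded because $m<d$ implies $n-m \ge k$ and $(G)_{\overline{\Subset}}$ has rank $k$, as noted in Remark~\ref{rem:varyinggenerators}. Hence the feasible set $\{\bldu : \|\bldu (G)_{\overline{\Subset}}\|_\infty \le 1\}$ is compact. Dual feasibility also follows from this rank condition, since $\bldg_i$ lies in the column space of $(G)_{\overline{\Subset}}$, so $\EE_{\Subset,i}\neq\emptyset$. Strong duality then gives equality of the optima. Also, the minimum over $\EE_{\Subset,i}$ is attained, since this is itself a linear program with finite value.

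The main obstacle is only bookkeeping: confirming the reduction from $\|\cdot\|_\infty$ to a one-sided objective restricted to $i\in\Subset$, and verifying the rank condition needed for feasibility and boundedness. The duality computation itself is routine. Independence from the choice of $G$ follows because replacing $G$ by $AG$ with $A$ invertible leaves $\EE_{\Subset,i}$ unchanged.
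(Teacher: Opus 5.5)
Your proposal is correct and follows the paper's proof: you dualize~(\ref{eq:LP2}), observe that each feasible pair corresponds to some $\blde\in\EE_{\Subset,i}$ whose best cost, attained at its positive/negative parts, is $\|\blde\|_1$, and then take the outer maximum over $(\Subset,i)$ as in Figure~\ref{fig:LP}; you also check the strong-duality hypotheses via $\rank((G)_{\overline{\Subset}})=k$, which the paper leaves implicit. One small repair: $\Height_m(\code)\ge 1$ alone does not show that restricting to $i\in\Subset$ loses nothing, whereas the first half of the proof of Theorem~\ref{thm:LP} does, since for a normalized $m$-extremal codeword, $i=\pi(0)$ lies in $\Subset=\{\pi(t): t\in\Int{m}\}$ and already attains $\Height_m(\code)$.
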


\begin{proof}
The dual of the linear program~(\ref{eq:LP2})
(which results in the same output value) takes the form:
\begin{equation}
\label{eq:LP-dual2}
\fbox{%
\ifISIT
    \small%
\fi
$%
\renewcommand{\arraystretch}{1.2}
\begin{array}{ll}
\textrm{Minimize} & \\
&
\bldone \cdot (\bldy + \bldz) \\
\multicolumn{2}{l}{%
\textrm{over $\bldy, \bldz \in \Realfield^{n-m}$, subject to}} \\
&
\left\{
\renewcommand{\arraystretch}{1}
\begin{array}{rcl}
(G)_{\overline{\Subset}} \, (\bldy - \bldz) & = & \bldg_i \\
\bldy & \ge & \bldzero \\
\bldz & \ge & \bldzero ,
\end{array}
\right.
\end{array}
$}
\end{equation}
where $\bldy$ and $\bldz$ are two column vectors of variables
that correspond, respectively, to the two sets of inequalities
in~(\ref{eq:LP2}).
The set of feasible solutions consists of all pairs of
nonnegative vectors $\bldy$ and $\bldz$
whose difference, $\bldy - \bldz$, is
a vector $\blde$ in $\EE_{\Subset,i}$.
Fixing any such vector $\blde$, the expression
$\bldone \cdot (\bldy + \bldz)$ is minimized when
\[
\bldy = \max \{ \blde, \bldzero \}
\quad \textrm{and} \quad
\bldz = -\min \{ \blde, \bldzero \} ,
\]
where the maximum and minimum are applied componentwise.
For this choice we get
$\bldone \cdot (\bldy + \bldz) = \| \blde \|_1$.
We conclude that
the solution to the linear program~(\ref{eq:LP-dual2}) is
the minimum of $\| \blde \|_1$ taken over
$\blde \in \EE_{\Subset,i}$. The outer enumeration over
$\Subset$ and $i$, as in Figure~\ref{fig:LP},
in turn, yields~(\ref{eq:LP-dual1}).
\end{proof}

\ifISIT
\else
The $k \times (n{-}m)$ matrix $(G)_{\overline{\Subset}}$ is
a generator matrix of the punctured code
$(\code)_{\overline{\Subset}}$.
As pointed out in Remark~\ref{rem:varyinggenerators},
when $m < d$ we have $\rank \left( (G)_{\overline{\Subset}} \right) = k$
and, so, $(\code)_{\overline{\Subset}}$ is
a linear $[n{-}m,k]$ code.\footnote{%
Conversely, when $m \ge d$, for any $m$-subset
$\Subset \subseteq \Int{n}$ that is a superset of the support of
a minimum-weight codeword of $\code$ we have
$\rank \left( (G)_{\overline{\Subset}} \right) < k$.
Yet $\rank(G) = k$ and, so, there is an index $i \in \Subset$
such that $\bldg_i$ is not in the linear span of the columns of
$(G)_{\overline{\Subset}}$. In this case
$\EE_{\Subset,i}$ is empty, which translates into
an infinite value in the minimum in~(\ref{eq:LP-dual1}).}
In view of this,
the affine space $\EE_{\Subset,i}$ in~(\ref{eq:EE}) is a coset of
the dual code $((\code)_{\overline{\Subset}})^\perp$,
which is a linear $[n{-}m,r{-}m]$ code
with $(G)_{\overline{\Subset}}$ as its parity-check matrix,
where $r = n-k$ is the redundancy of $\code$.
Specifically, $\EE_{\Subset,i}$
consists of all vectors $\blde \in \Realfield^{n-m}$ whose
syndrome, with respect to the parity-check matrix
$(G)_{\overline{\Subset}}$, equals $\bldg_i$.
The inner minimum in~(\ref{eq:LP-dual1}) is
the smallest $L_1$ norm of any vector in this coset,
i.e., the minimization is, in effect, a maximum-likelihood decoder
for a memoryless channel where
the error values obey a Laplace distribution.
\fi

Letting $B$ be an $(r{-}m) \times (n{-}m)$
parity-check matrix of $(\code)_{\overline{\Subset}}$
(and a generator matrix of $((\code)_{\overline{\Subset}})^\perp$),
the inner minimization in~(\ref{eq:LP-dual1}) can
\ifISIT
\else
thus
\fi
be recast in the form
\begin{equation}
\label{eq:LAD1}
\min_{\bldv \in \Realfield^{r-m}} \| \blda - \bldv B \|_1 ,
\end{equation}
for some given vector $\blda \in \Realfield^{n-m}$, where
both $B$ and $\blda$ depend on $\code$, $\Subset$, and $i$.
Specifically, $\blda$ is any vector in the coset $\EE_{\Subset,i}$,
and $\bldv$ in~(\ref{eq:LAD1}) is related to
$\blde$ in~(\ref{eq:LP-dual1}) by
\[
\blde = \blda - \bldv B .
\]
The minimization problem in~(\ref{eq:LAD1}), in turn, is known
as \emph{least absolute deviations (LAD) regression}~\cite{BS}.
It is a convex optimization problem in $r-m$ variables
and can be solved, \emph{inter alia}, using linear programming,
by introducing a vector $\widehat{\blde}$ of
$n-m$ auxiliary variables (see~\cite[p.~158]{BS}):
\begin{equation}
\ifISIT
    \nonumber
\else
\label{eq:LAD2}
\fi
\fbox{%
\ifISIT
    \small%
\fi
$%
\renewcommand{\arraystretch}{1.2}
\begin{array}{ll}
\textrm{Minimize} & \\
&
\bldone \cdot \widehat{\blde} \\
\multicolumn{2}{l}{%
\textrm{over
$(\bldv,\widehat{\blde}) \in \Realfield^{r-m} \times \Realfield^{n-m}$,
subject to}} \\
&
\left\{
\renewcommand{\arraystretch}{1}
\begin{array}{rcr}
\bldv B + \widehat{\blde} & \ge & \blda \phantom{.} \\
\bldv B - \widehat{\blde} & \le & \blda .
\end{array}
\right.
\end{array}
$}
\end{equation}
\ifISIT
\else
Note that $\widehat{\blde}$ is related to $\bldy$ and $\bldz$
in~(\ref{eq:LP-dual2}) by
$\widehat{\blde} = \bldy + \bldz$ and its entries
are the magnitudes of the entries of $\blde$ in~(\ref{eq:LP-dual1}).

\begin{remark}
The linear program~(\ref{eq:LAD2}) involves $n+r-2m$ variables
that satisfy $2(n-m)$ linear inequality constraints,
and its augmented form involves $2(n-m)$ nonnegative slack variables
that satisfy $2(n-m) - (n+r-2m) = k$ linear equality constraints
(the same as the linear program~(\ref{eq:LP-dual2})).\qed
\end{remark}
\fi

We end this section by presenting a structural property of
$m$-extremal codewords.

\begin{lemma}
\label{lem:structural}
Let $\code$ be a linear $[n,k,d]$ code over $\Realfield$
with a dual code of minimum
\ifISIT
    distance
\else
distance\footnote{%
The case $d^\perp = 1$ corresponds to a code $\code$ that has
a coordinate which is identically $0$ across all codewords.
Shortening $\code$ on that coordinate does not affect its $m$-height.}
\fi
$d^\perp \ge 2$.
For $m \in \Int{1:d}$, let $\bldc = (c_j)_{j \in \Int{n}}$ be
any $m$-extremal codeword in $\code$
and let $\pi(\cdot)$ be a sorting permutation of $\bldc$.
Then,
\[
|c_{\pi(m)}| = |c_{\pi(m+1)}| = \ldots = |c_{\pi(m+d^\perp-2)}| \ne 0
\]
(i.e., $\PP_m(\bldc) \supseteq \Int{m:m{+}d^\perp{-}1}$ and,
so, $|\PP_m(\bldc)| \ge d^\perp-1$).
\end{lemma}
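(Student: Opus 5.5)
Normalize so that $|c_{\pi(m)}|=1$; since $m<d$, we have $\Height_m(\code)<\infty$, so $c_{\pi(m)}\ne 0$ and this is possible. Let $\Subset=\{\pi(0),\ldots,\pi(m-1)\}$ and $i=\pi(0)$, replacing $\bldc$ by $-\bldc$ if needed so that $c_i>0$. As in the proof of Theorem~\ref{thm:LP}, $\bldc\in\code\vert_{\overline{\Subset}}$ and $c_i=\Height_m(\code)$ is the maximum over all pairs $(\Subset,i)$. Hence $\bldc=\bldu G$ is an optimal solution of the linear program~(\ref{eq:LP2}) for this pair. The goal is to show that at least $d^\perp-1$ coordinates $j\in\overline{\Subset}$ satisfy $|c_j|=1$. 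Since all entries in $\overline{\Subset}$ have magnitude at most $1=|c_{\pi(m)}|$, the entries of magnitude exactly $1$ there are precisely $\pi(m),\pi(m+1),\ldots$ in sorted order. So this would give the claimed equalities, and nonvanishing follows from $|c_{\pi(m)}|=1$.

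I would use complementary slackness with the dual~(\ref{eq:LP-dual2}), or equivalently Theorem~\ref{thm:LP-dual}. Take an optimal dual solution $\blde\in\EE_{\Subset,i}$ minimizing $\|\blde\|_1$. Strong duality gives $\bldu\cdot\bldg_i=\|\blde\|_1$. On the other hand,
\[
\bldu\cdot\bldg_i=\bldu\,(G)_{\overline{\Subset}}\,\blde=\sum_{j\in\overline{\Subset}} c_j e_j\le\sum_j |e_j|=\|\blde\|_1 .
\]
Equality forces $|c_j|=1$ (with $\sgn c_j=\sgn e_j$) for every $j\in\Support(\blde)$. So it suffices to show $\weight(\blde)\ge d^\perp-1$.

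For this step, view $\blde$ as a vector of length $n$ by padding with zeros on $\Subset$, and subtract the unit vector at position $i$. The resulting vector $\blde'$ satisfies $G\blde'^{T}=(G)_{\overline{\Subset}}\blde-\bldg_i=\bldzero$, so $\blde'$ is a codeword of $\code^\perp$. It is nonzero because its $i$th entry is $-1$. Therefore $\weight(\blde')\ge d^\perp$, and since $\blde'$ is supported on $\Support(\blde)\cup\{i\}$, we get $\weight(\blde)\ge d^\perp-1$.

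The only delicate point is making sure complementary slackness applies to this particular $\bldc$ for an optimal dual $\blde$, which needs some optimal dual solution to exist. The displayed argument does not depend on which primal optimum is chosen, so any primal optimum works. The LP is feasible and bounded because $m<d$, so a dual optimum exists, and the rest is routine.
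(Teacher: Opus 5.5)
Your proof is correct and follows essentially the same route as the paper. You show that $\bldc$ is optimal for the linear program~(\ref{eq:LP2}) with $\Subset=\{\pi(0),\ldots,\pi(m-1)\}$ and $i=\pi(0)$, use LP duality to force $|c_j|=1$ on $\Support(\blde)$ for an optimal $\blde\in\EE_{\Subset,i}$, and read $(G)_{\overline{\Subset}}\,\blde=\bldg_i$ as a nonzero codeword of $\code^\perp$ supported on $\Support(\blde)\cup\{i\}$, which gives $|\Support(\blde)|\ge d^\perp-1$. The only differences are cosmetic: you argue directly rather than by contradiction, and you derive the needed complementary-slackness conclusion inline from strong duality instead of citing the theorem.
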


\begin{proof}
Since $m < d$ we have $c_{\pi(m)} \ne 0$;
without loss of generality we assume hereafter
that $|c_{\pi(m)}| = 1$.
Fix a $k \times n$ generator matrix $G$ of $\code$
and consider the linear program~(\ref{eq:LP2})
when $i \leftarrow \pi(0)$ and
\ifISIT
    $\Subset \leftarrow \left\{ \pi(t) \,:\, t \in \Int{m} \right\}$.
\else
\[
\Subset \leftarrow \left\{ \pi(t) \,:\, t \in \Int{m} \right\} .
\]
\fi
The maximum then is attained
at the vector $\bldu$ that (uniquely) satisfies $\bldc = \bldu G$.
The result will follow once we show that
there exists a $(d^\perp{-}1)$-subset $\II$ of $\overline{\Subset}$
such that, for this maximizing $\bldu$,
\[
(\bldc)_\II = \bldu \, (G)_\II \in \{ \pm 1 \}^{d^\perp-1} .
\]

Suppose to the contrary that such a subset $\II$ does not exist, namely,
there exists a subset $\JJ \subseteq \overline{\Subset}$ of
size $|\JJ| > |\overline{\Subset}| - d^\perp + 1$ such that
at any coordinate $j \in \JJ$ of $\bldu \, (G)_{\overline{\Subset}}$,
both inequalities in~(\ref{eq:LP2}) are strict.
Referring to the proof of Theorem~\ref{thm:LP-dual},
consider the minimizing vectors $\bldy$ and $\bldz$ of
the dual linear program~(\ref{eq:LP-dual2}).
By the complementary slackness theorem~\cite[p.~96]{Luenberger},
both $\bldy$ and $\bldz$ have zeroes at all
the positions that are indexed by $\JJ$.
Consequently, $\blde = \bldy - \bldz$ has zeroes at
those positions as well, which means that
\begin{equation}
\label{eq:v1}
\weight(\blde) \le |\overline{\Subset}| - |\JJ| < d^\perp - 1 .
\end{equation}
On the other hand, by~(\ref{eq:LP-dual2}) we also have
\begin{equation}
\label{eq:v2}
(G)_{\overline{\Subset}} \, \blde = \bldg_i .
\end{equation}
Combining~(\ref{eq:v1}) and~(\ref{eq:v2}) we conclude that
there are $d^\perp - 1$ linearly dependent columns in $G$.
However, this is impossible, since $G$ is a parity-check matrix of
the dual code $\code^\perp$ and, as such,
every $d^\perp - 1$ columns in $G$ are linearly independent.
\end{proof}

\section{Combinatorial characterization of the height}
\label{sec:combinatorial}

In this section, we present characterizations of the $m$-height of
a linear code as a maximum (or a maximum--minimum)
taken over an explicit subset of finitely many codewords.
In essence, these codewords
correspond to a subset of the vertices of
the union of the feasible regions of
the linear program~(\ref{eq:LP2})
\ifISIT
    (or~(\ref{eq:LP-dual2}))
\else
(respectively, the dual linear program~(\ref{eq:LP-dual2}))
\fi
as  $\Subset$ ranges over all the $m$-subsets of $\Int{n}$
and $i$ ranges over the elements in each $\Subset$.

\ifISIT
\else
\subsection{Primal characterization}
\label{sec:combinatorial-primal}
\fi

The next lemma adds to Lemma~\ref{lem:structural} in that it implies
that there is always an $m$-extremal codeword $\bldc \in \code$
such that $|\PP_m(\bldc)| \ge k$.
\ifISIT
    We omit the proof due to space limitations.
\else
(Note that for the special case of MDS codes, we get from
Lemma~\ref{lem:structural} that $|\PP_m(\bldc)| \ge k$
for \underline{any} $m$-extremal codeword $\bldc$ in $\code$.)
\fi

\begin{lemma}
\label{lem:m-height}
Let $\code$ be a linear $[n,k,d]$ code over $\Realfield$
and let $m \in \Int{1:d}$.
Suppose that $\bldc$ is an $m$-extremal codeword in $\code$
with a largest set $\TT = \PP_m(\bldc)$
among all $m$-extremal codewords in $\code$. Then for
any $k \times n$ generator matrix $G$ of $\code$ over $\Realfield$,
\[
\rank \left( (G)_\TT \right) = k
\]
(in particular, $|\TT| \ge k$).

Equivalently, in any parity-check matrix of $\code$ over $\Realfield$,
the columns in $(H)_{\overline{\TT}}$ are linearly independent.
\end{lemma}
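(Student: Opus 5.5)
The plan is a perturbation argument: if $\rank((G)_\TT) < k$, we move $\bldc$ along a codeword that vanishes on $\TT$ and produce another $m$-extremal codeword with a strictly larger set $\PP_m(\cdot)$. This contradicts the maximality of $\TT$.

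\emph{Setup.} Since $m < d$ we have $c_{\pi(m)} \ne 0$, so normalize $|c_{\pi(m)}| = 1$. Write $h = \Height_m(\code) = \|\bldc\|_\infty$ and partition $\Int{n}$ into
\[
\AA = \{ j : |c_j| > 1 \}, \qquad \TT = \{ j : |c_j| = 1 \}, \qquad \overline{\AA \cup \TT} = \{ j : |c_j| < 1 \},
\]
so that $|\AA| \le m < |\AA| + |\TT|$.

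\emph{Perturbing direction.} Suppose $\rank((G)_\TT) < k$. Then there is a nonzero $\bldu'$ with $\bldu' (G)_\TT = \bldzero$, i.e., a nonzero codeword $\bldc' = \bldu' G$ that vanishes on $\TT$. Since $\bldc' \ne \bldzero$, it has some $j_0 \notin \TT$ with $c'_{j_0} \neq 0$. Consider $\bldc(t) = \bldc + t \bldc'$ and $f(t) = \|\bldc(t)\|_\infty$, which is convex in $t$.

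\emph{Height stays at $h$ up to the first collision.} On $\TT$ the entries of $\bldc(t)$ keep magnitude $1$. For $|t|$ small, entries in $\AA$ stay above $1$ and entries outside $\AA \cup \TT$ stay below $1$, so the $m$th-largest magnitude of $\bldc(t)$ is still $1$ and $\Height_m(\bldc(t)) = f(t)$. This remains true on the whole interval up to the first $t^*$ at which some entry outside $\TT$ reaches magnitude exactly $1$: the count of entries of magnitude $>1$ only shrinks and the count of entries of magnitude $\ge 1$ only grows. Extremality of $\bldc$ then gives $f(t) \le h = f(0)$ on this interval. A convex function with an interior local maximum is constant near $0$, and its slope is nondecreasing, so $f(t) \ge h$ beyond that neighborhood. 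Hence $f \equiv h$, and every $\bldc(t)$ on the interval, including the endpoint by continuity, is $m$-extremal.

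\emph{A collision must occur.} Since $|c_{j_0} + t c'_{j_0}| \to \infty$ as $|t| \to \infty$ in either direction, a finite collision time $t^*$ exists in a suitable direction. If $j_0 \notin \AA$, the entry grows to magnitude $1$ in a direction where it increases. If $j_0 \in \AA$, choose the direction in which $|c_{j_0}(t)|$ decreases; either it or some other moving entry hits $1$ first. At $t^*$ we get $\PP_m(\bldc(t^*)) \supsetneq \TT$, contradicting the choice of $\bldc$. The main care is in this step and in the previous one: checking that the $m$th-largest magnitude stays equal to $1$ through the endpoint, and that convexity rules out $f$ decreasing on one side.

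\emph{Equivalent form.} $\rank((G)_\TT) = k$ holds iff no nonzero codeword vanishes on $\TT$. A nonzero codeword supported inside $\overline{\TT}$ is exactly a linear dependency among the columns of $(H)_{\overline{\TT}}$, so this is equivalent to those columns being linearly independent. Finally, $|\TT| \ge \rank((G)_\TT) = k$.
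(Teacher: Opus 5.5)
Your proof is correct and uses the same perturbation as the paper: you move $\bldc$ along a nonzero codeword that vanishes on $\TT$ until a new entry reaches magnitude $1$, which yields an $m$-extremal codeword with a strictly larger $\PP_m$. The difference is only organizational: your convexity argument for $t \mapsto \|\bldc + t\bldc'\|_\infty$ replaces the paper's two cases (the perturbing codeword meets $\PP_0(\bldc)$, so the height would strictly increase, versus it does not), and it also makes explicit that no entry can grow past the maximum before the collision.
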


\ifISIT
\else
\begin{proof}
Let $\pi(\cdot)$ be a sorting permutation of $\bldc$; we assume
hereafter in the proof (without loss of generality)
that $|c_{\pi(m)}| = 1$.

Suppose to the contrary that
$\rank \left( (G)_\TT \right) < k$
(or that for a parity-check matrix $H$ of $\code$,
the columns in $(H)_{\overline{\TT}}$
are linearly dependent). Then there exists a nonzero codeword
$\bldx = (x_j)_{j \in \Int{n}} \in \code$ such that
$\Support(\bldx) \subseteq \overline{\TT}$, i.e.,
\begin{equation}
\label{eq:disjoint}
\Support(\bldx) \cap \TT = \emptyset .
\end{equation}
For $\varepsilon \in \Realfield$, define
\[
\bldc_\varepsilon
= (c_{\varepsilon,j})_{j \in \Int{n}}
= \bldc + \varepsilon \cdot \bldx .
\]
We distinguish between two cases.

\emph{Case 1:}
$\Support(\bldx) \cap \PP_0(\bldc) \ne \emptyset$.
Let $i \in \Support(\bldx) \cap \PP_0(\bldc)$;
by possibly negating $\bldx$
we can assume that $c_{\pi(i)}$ and $x_{\pi(i)}$ have the same sign.
Let $\varepsilon \in \Realfield$ be in the range
\[
0 < \varepsilon < \min_{j \in \Support(\bldx)}
\Bigl| \frac{c_j}{x_j} \Bigr| .
\]
By~(\ref{eq:disjoint}), for every $j \in \Int{n}$ we have:
\[
\begin{array}{ccc}
|c_j| = 1 & \Longleftrightarrow & |c_{\varepsilon,j}| = 1, \\
|c_j| > 1 & \Longleftrightarrow & |c_{\varepsilon,j}| > 1, \\
|c_j| < 1 & \Longleftrightarrow & |c_{\varepsilon,j}| < 1.
\end{array}
\]
Hence,
\[
\PP_m(\bldc_\varepsilon) = \TT
\]
(and all the entries in $(\bldc_\varepsilon)_\TT$ are $\pm 1$).
On the other hand,
\[
|c_{\pi(i)} + \varepsilon \cdot x_{\pi(i)}|
> |c_{\pi(i)}| = |c_{\pi(0)}| ,
\]
which implies that $\Height_m(\bldc_\varepsilon) > \Height_m(\bldc)$,
thereby contradicting the $m$-extremality of $\bldc$.

\emph{Case 2:}
$\Support(\bldx) \cap \PP_0(\bldc) = \emptyset$.
Let $\varepsilon$ be of smallest magnitude such that
$(\bldc_\varepsilon)_{\overline{\TT}}$
contains an entry of magnitude $1$;
explicitly,
\begin{equation}
\label{eq:epsilon}
\varepsilon = \pm \min_{j \in \Support(\bldx)}
\left| \frac{1 - |c_j|}{x_j} \right| ,
\end{equation}
with the sign determined to match that of
$c_{j'} \cdot (1 - |c_{j'}|)/x_{j'}$, where $j'$ is an index $j$
at which the minimum in~(\ref{eq:epsilon}) is attained.

On the one hand, for every $j \in \Int{n}$:
\[
\begin{array}{ccc}
|c_j| = 1 & \Longrightarrow & |c_{\varepsilon,j}| = 1,   \\
|c_j| > 1 & \Longrightarrow & |c_{\varepsilon,j}| \ge 1, \\
|c_j| < 1 & \Longrightarrow & |c_{\varepsilon,j}| \le 1.
\end{array}
\]
These, in turn, imply that
\[
\TT \subseteq \PP_m(\bldc_\varepsilon) .
\]
Moreover, the containment is strict, since there exists at least
one index, namely, $j' \in \overline{\TT}$,
where $|c_{\varepsilon,j'}| = 1$ while $|c_{j'}| \ne 1$.

On the other hand,
\[
\Height_m(\bldc_\varepsilon)
= |c_{\varepsilon,\pi(0)}| = |c_{\pi(0)}| = \Height_m(\bldc) .
\]
Thus, $\bldc_\varepsilon$ is $m$-extremal
and $|\PP_m(\bldc_\varepsilon)| > |\TT|$;
yet this contradicts the assumption on the maximality of $|\TT|$
among all $m$-extremal codewords in $\code$.
\end{proof}

\begin{remark}
\label{rem:m-height-m>=d}
Lemma~\ref{lem:m-height} holds also when $m \ge d$,
except when the code $\code$ therein is MDS.
Specifically, when $m \ge d$,
for any $m$-extremal codeword $\bldc \in \code$
(in particular, for any minimum-weight codeword of $\code$)
we have $c_{\pi(m)} = 0$ (and therefore $\Height_m(\bldc) = \infty)$,
yet when $\code$ is MDS,
none of its nonzero codewords has more than $k-1$ zero entries.\qed
\end{remark}
\fi

Lemma~\ref{lem:m-height} implies the next theorem, which presents
another characterization of the $m$-height of a linear code.
We introduce the following notation.
For a linear $[n,k{=}n{-}r]$ code $\code$, we let
$\Upsilon(\code)$ be the set of all $k$-subsets $\II \subseteq \Int{n}$
such that
\ifISIT
\else
the projections of the codewords of $\code$ on $\II$
range over the whole space $\Realfield^k$, namely:
\[
\bigl\{ (\bldc)_\II \,:\, \bldc \in \code \bigr\} = \Realfield^k .
\]
Equivalently,
\fi
in any $k \times n$ generator matrix $G$ of $\code$,
the submatrix $(G)_\II$ is nonsingular.
\ifISIT
\else
It is easy to see that a $k$-subset $\II$ belongs to $\Upsilon(\code)$
if and only if its complement $r$-subset
$\overline{\II}$ belongs to $\Upsilon(\code^\perp)$.
\fi

\begin{theorem}
\label{thm:m-height}
Let $\code$ be a linear $[n,k,d]$ code over $\Realfield$,
let $m \in \Int{1:d}$,
and let $G$ be any $k \times n$ generator matrix of $\code$
over $\Realfield$.
Then
\begin{equation}
\ifISIT
    \nonumber
\else
\label{eq:m-height1}
\fi
\Height_m(\code) =
\max_{\II \in \Upsilon(\code)}
\max_{\bldu \in \{ \pm 1 \}^k}
\Height_m \bigl( \bldu \, (G)_\II^{-1} G \bigr) .
\end{equation}
\ifISIT
\else

Also,
\begin{equation}
\label{eq:m-height2}
\Height_m(\code) =
\max_{\II \in \Upsilon(\code)}
\max_\bldu
\,
\bigl\| \bldu \, (G)_\II^{-1} (G)_{\overline{\II}} \bigr\|_\infty  ,
\end{equation}
where $\bldu$ ranges over all the elements in $\{ \pm 1 \}^k$
such that the vector
$\bldu \, (G)_\II^{-1} (G)_{\overline{\II}} \; (\in \Realfield^{n-k})$
has no more than $m$ entries of magnitude greater than $1$
and no more than $n-m-1$ entries of magnitude smaller than $1$.
\fi
\end{theorem}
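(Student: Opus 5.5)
The plan is to prove (\ref{eq:m-height1}) by matching an upper bound against a trivial lower bound, using Lemma~\ref{lem:m-height} for the nontrivial direction. The lower bound is immediate. For every $\II \in \Upsilon(\code)$ and every $\bldu \in \{\pm 1\}^k$, the vector $\bldu\,(G)_\II^{-1}G$ is a codeword of $\code$. It is nonzero, because its restriction to $\II$ equals $\bldu \ne \bldzero$. Hence its $m$-height is at most $\Height_m(\code)$, and so the right-hand side of (\ref{eq:m-height1}) is at most $\Height_m(\code)$.

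For the reverse inequality, I will take an $m$-extremal codeword $\bldc$ whose set $\TT = \PP_m(\bldc)$ is largest among all $m$-extremal codewords. I normalize so that $|c_{\pi(m)}| = 1$; this is possible since $m<d$ forces $c_{\pi(m)} \ne 0$. Then every entry of $(\bldc)_\TT$ is $\pm 1$. By Lemma~\ref{lem:m-height}, $\rank((G)_\TT) = k$, so there is a $k$-subset $\II \subseteq \TT$ with $(G)_\II$ nonsingular, i.e.\ $\II \in \Upsilon(\code)$. Set $\bldu = (\bldc)_\II \in \{\pm1\}^k$. Writing $\bldc = \bldv G$, we have $\bldu = \bldv (G)_\II$, hence $\bldv = \bldu (G)_\II^{-1}$ and $\bldc = \bldu\,(G)_\II^{-1} G$. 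Therefore $\Height_m(\code) = \Height_m(\bldc)$ is attained by one of the candidates in the double maximum, which gives equality in (\ref{eq:m-height1}).

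For (\ref{eq:m-height2}) I will reuse the same witness. Since $(\bldc)_\II = \bldu$ has all entries of magnitude $1$, the codeword $\bldc$ satisfies $\Height_m(\bldc) = |c_{\pi(0)}|/|c_{\pi(m)}|$ with $|c_{\pi(m)}| = 1$. Two facts about $\bldx = \bldu\,(G)_\II^{-1} (G)_{\overline{\II}}$, the restriction of $\bldc$ to $\overline{\II}$, follow:
\begin{itemize}
\item At most $m$ entries of $\bldc$ exceed $1$ in magnitude, so $\bldx$ has at most $m$ such entries, and $\|\bldc\|_\infty = \max\{1,\|\bldx\|_\infty\} = \|\bldx\|_\infty$ (since $\Height_m \ge 1$; if it equals $1$ everything is $1$ anyway).
\item The entries of $\bldc$ of magnitude smaller than $1$ number at most $n-m-1$, because $\pi(0),\dots,\pi(m)$ all have magnitude at least $1$. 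The entries with $|c_j|<1$ lie in $\overline{\II}$, so $\bldx$ satisfies the admissibility condition.
\end{itemize}
Hence the right-hand side of (\ref{eq:m-height2}) is at least $\Height_m(\code)$.

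Conversely, take any admissible $\II,\bldu$ and set $\bldc = \bldu(G)_\II^{-1}G$. At most $m$ entries of $\bldc$ have magnitude above $1$, and at most $n-m-1$ have magnitude below $1$, so at least $n-m$ entries have magnitude at least $1$. Consequently $|c_{\pi(m)}| \ge 1$, while $|c_{\pi(m)}| \le 1$ by the first condition, so $|c_{\pi(m)}| = 1$. Then $\Height_m(\bldc) = \|\bldc\|_\infty \ge \|\bldx\|_\infty$, and $\Height_m(\bldc) \le \Height_m(\code)$.

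The one delicate step is the claim $|c_{\pi(m)}| = 1$ in this converse, which must be justified from the two counting conditions. An alternative route is Theorem~\ref{thm:LP}: the first condition alone places $\bldc$ in some $\code\vert_{\overline{\Subset}}$, which already gives the bound $\|\bldc\|_\infty \le \Height_m(\code)$. The main obstacle overall is the existence of the nonsingular $\II \subseteq \TT$, and Lemma~\ref{lem:m-height} supplies it.
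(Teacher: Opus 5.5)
Your proposal is correct and follows the paper's proof essentially step for step. Lemma~\ref{lem:m-height} supplies a nonsingular $\II \subseteq \TT$ on which the normalized extremal codeword is $\pm 1$. The two counting conditions in~(\ref{eq:m-height2}) are equivalent to requiring $(\bldc)_\II \in \{\pm 1\}^k$ together with $\II \subseteq \PP_m(\bldc)$, which gives $\Height_m(\bldc) = \|\bldc\|_\infty$.

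Two small repairs are needed.

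\begin{itemize}
\item In the converse, the count should read ``at least $m+1$ entries of magnitude at least $1$'', not $n-m$.
\item In the edge case $\Height_m(\code) = 1$, your parenthetical ``everything is $1$'' is false, because entries outside $\TT$ may have magnitude below $1$. To get $\|\bldx\|_\infty = 1$ you need $|\TT| > k$. This does hold. Suppose $\TT = \II$; then $k \ge m+1$. Since $(G)_\TT$ is nonsingular, you can add $\varepsilon$ times a codeword that is nonzero on only one position of $\TT$. This raises that entry to magnitude $1+\varepsilon$, while the other $k-1 \ge m$ entries in $\TT$ stay at $\pm 1$ and, for small $\varepsilon$, all entries outside $\TT$ stay below $1$. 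The resulting codeword has $m$-height $1+\varepsilon > 1$, a contradiction.
\end{itemize}

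The paper's own proof passes over the same point when it identifies the right-hand side of~(\ref{eq:m-height2}), which maximizes $\|\bldx\|_\infty$, with a maximum of $\|\bldc\|_\infty$.
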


\ifISIT
    We omit the details of the proof due to space limitations.
\else
\begin{proof}
Define the following subset $\code'$ of $\code$:
\[
\code'
=
\Bigl\{
\bldc \in \code \,:\,
(\bldc)_\II \in \{ \pm 1 \}^k \;
\textrm{for some} \; \II \in \Upsilon(\code) \Bigr\}
\]
(the subset $\II$ may depend on $\bldc$).
The right-hand side of~(\ref{eq:m-height1}) then equals
\[
\max_{\bldc \in \code'} \Height_m(\bldc) .
\]
Eq.~(\ref{eq:m-height1})
now follows from Lemma~\ref{lem:m-height}, which implies
that $\code'$ contains at least one $m$-extremal codeword of $\code$.

Turning to Eq.~(\ref{eq:m-height2}),
given some $\II \in \Upsilon(\code)$, consider a codeword
$\bldc = \bldu \, (G)_\II^{-1} G$.
The assumed conditions on $\bldu$
in the inner maximization in~(\ref{eq:m-height2})
are then equivalent to requiring that
$(\bldc)_\II \in \{ \pm 1 \}^k$ and
$\II \subseteq \PP_m(\bldc)$.
Hence, the right-hand side of (\ref{eq:m-height2}) equals
\[
\max_{\bldc \in \code''} \| \bldc \|_\infty ,
\]
where
\begin{eqnarray*}
\lefteqn{
\code''
=
\Bigl\{
\bldc \in \code \,:\,
(\bldc)_\II \in \{ \pm 1 \}^k \;
\textrm{for some} \; \II \in \Upsilon(\code)
} \makebox[25ex]{} \\
&&
\textrm{such that} \; \II \subseteq \PP_m(\bldc)
\Bigr\} .
\end{eqnarray*}
Now, for $\bldc \in \code''$ we have
$\Height_m(\bldc) = \| \bldc \|_\infty$, which means
that the right-hand side of~(\ref{eq:m-height2}) also equals
\[
\max_{\bldc \in \code''} \Height_m(\bldc) .
\]
Lemma~\ref{lem:m-height} again implies
that $\code''$ contains at least one $m$-extremal codeword of $\code$.
\end{proof}

Noting that the vectors $\bldu$ in the inner maximizations
in both~(\ref{eq:m-height1}) and~(\ref{eq:m-height2})
can be restricted to have $1$ as their first entry,
the total number of terms to maximize over
in the right-hand side of~(\ref{eq:m-height1}) (or~(\ref{eq:m-height2}))
is at most
\begin{equation}
\label{eq:complexity}
\binom{n}{k} \cdot 2^{k-1} .
\end{equation}

\begin{remark}
\label{rem:m-height}
Eq.~(\ref{eq:m-height2}) can be expressed also in terms of
a full-rank $r \times n$ parity-check matrix $H$ of $\code$
(where $r = n-k$). Recall that for every $r$-subset
$\JJ \in \Upsilon(\code^\perp)$, its complement $k$-subset
$\II = \overline{\JJ}$ belongs to $\Upsilon(\code)$ (and vice versa),
and we have
\[
(H)_\JJ^{-1} (H)_{\overline{\JJ}}
=
-\left( (G)_\II^{-1} (G)_{\overline{\II}} \right)^\top .
\]
    From~(\ref{eq:m-height2}) we then get
\begin{equation}
\label{eq:m-height3}
\Height_m(\code) =
\max_{\JJ \in \Upsilon(\code^\perp)}
\max_\bldu
\,
\bigl\| (H)_\JJ^{-1} (H)_{\overline{\JJ}} \,\bldu^\top \bigr\|_\infty  ,
\end{equation}
where $\bldu$ ranges over all the elements in $\{ \pm 1 \}^k$
such that $(H)_\JJ^{-1} (H)_{\overline{\JJ}} \, \bldu^\top$
has no more than $m$ entries of magnitude greater than $1$
and no more than $n-m-1$ entries of magnitude smaller than $1$.
In both~(\ref{eq:m-height2}) and~(\ref{eq:m-height3}),
the total number of terms to maximize over is the same
(and the range of $\bldu$ is the same).
However, the computation of~(\ref{eq:m-height3}) involves inverting
$r \times r$ matrices, compared to~(\ref{eq:m-height2}) where
the order of the inverted matrices is $k \times k$;
hence, from a complexity viewpoint, Eq.~(\ref{eq:m-height3})
may be preferable when $k > r$.\qed
\end{remark}

\subsection{Dual characterization}
\label{sec:combinatorial-dual}
\fi

We next turn to results that are conceptual duals of
Lemma~\ref{lem:m-height} and Theorem~\ref{thm:m-height}, based on
the characterization~(\ref{eq:LP-dual1}) of
the $m$-height of a linear code.
\ifISIT
\else

When $m = r$ we have $|\EE_{\Subset,i}| = 1$, in which case
the minimization in~(\ref{eq:LP-dual1}) becomes trivial
(we will deal with this case in Section~\ref{sec:r-height}).

The next lemma provides an explicit solution to
the minimization problem~(\ref{eq:LAD1})
(and hence to~(\ref{eq:LP-dual1})) for the case $m = r-1$.
This lemma is found in~\cite[\S 1.2]{BS} and seems to have its roots
in the 18th century. For completeness, we include
a proof in the Appendix.

\begin{lemma}
\label{lem:m=r-1}
Let $\blda = (a_j)_{j \in \Int{\ell}}$
and $\bldb = (b_j)_{j \in \Int{\ell}}$
be two vectors in $\Realfield^\ell$ where
$\bldb \ne \bldzero$.
The function $f : \Realfield \rightarrow \Realfield$
that is defined by
\[
v \mapsto f(v) = \| \blda - v \cdot \bldb \|_1
\]
attains its minimum (not necessarily uniquely) at
the weighted median, $v_{\mathrm{med}}$, among the values
\[
\makebox[6ex]{$a_j/b_j,$} \quad j \in \Support(\bldb) ,
\]
with respective weights
\[
\makebox[6ex]{$|b_j|,$} \quad j \in \Support(\bldb) .
\]
In particular, at that minimum, the vector
$\blda - v_{\mathrm{med}} \cdot \bldb$ has at least one zero coordinate
at some position in $\Support(\bldb)$.
\end{lemma}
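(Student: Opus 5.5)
The plan is to prove the weighted-median statement directly from convexity and one-sided derivatives. First, coordinates $j \notin \Support(\bldb)$ contribute the constant $|a_j|$ to $f$, so they can be dropped. What remains is
\[
f(v) = \sum_{j \in \Support(\bldb)} |b_j| \cdot \bigl| a_j/b_j - v \bigr| + \textrm{const} .
\]
Writing $t_j = a_j/b_j$ and $w_j = |b_j| > 0$, this is a weighted sum of distances $\sum_j w_j |t_j - v|$. Such a function is convex and piecewise linear in $v$, and it is coercive: it tends to $\infty$ as $|v| \to \infty$ because $\bldb \ne \bldzero$. Hence its minimum is attained.

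Next, I would compute the one-sided derivatives of $f$:
\begin{eqnarray*}
f'_+(v) & = & \sum_{j :\, t_j \le v} w_j - \sum_{j :\, t_j > v} w_j , \\
f'_-(v) & = & \sum_{j :\, t_j < v} w_j - \sum_{j :\, t_j \ge v} w_j .
\end{eqnarray*}
Since $f$ is convex, $v$ is a minimizer if and only if $f'_-(v) \le 0 \le f'_+(v)$. This condition says exactly that the total weight strictly below $v$ is at most $W/2$ and the total weight strictly above $v$ is at most $W/2$, where $W = \sum_j w_j$. That is precisely the definition of a weighted median of the values $t_j$ with weights $w_j$, so every weighted median is a minimizer. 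A weighted median always exists: sort the $t_j$ and take the first point where the cumulative weight reaches $W/2$.

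For the ``in particular'' clause, I would show that the weighted median $v_{\mathrm{med}}$ can always be chosen among the values $t_j$ themselves. The point where the cumulative weight first reaches $W/2$ is some $t_{j_0}$. If $W/2$ is attained exactly at a breakpoint, then $f$ is constant on the interval between two consecutive $t$'s, and either endpoint is still a minimizer. With $v_{\mathrm{med}} = t_{j_0}$ we get $a_{j_0} - v_{\mathrm{med}} b_{j_0} = 0$ at a position $j_0 \in \Support(\bldb)$, as claimed.

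The only delicate point is the tie case, where the minimizer set is a whole interval, since the lemma says ``not necessarily uniquely''. The proof must be phrased so that the chosen median is one of the $t_j$; handling this tie case carefully with the subgradient condition is the main thing to get right.
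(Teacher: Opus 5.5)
Your proposal is correct and follows essentially the same route as the paper: both drop the coordinates outside $\Support(\bldb)$, write $f$ as $\sum_j |b_j| \cdot |v - a_j/b_j|$ plus a constant, and find the minimizer from the sign change of the derivative at a ratio $a_h/b_h$ where the cumulative weight crosses $\|\bldb\|_1/2$, which immediately gives the zero coordinate at position $h$. Your subgradient condition $f'_-(v) \le 0 \le f'_+(v)$ is a more careful version of the paper's derivative argument. The paper takes the largest $h$ whose prefix weight is at most $\|\bldb\|_1/2$, while you take the first index where the cumulative weight reaches $\|\bldb\|_1/2$; both choices are valid, and since your choice already lands on some $a_{j_0}/b_{j_0}$, your separate tie-case discussion is not actually needed.
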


\fi
The following lemma is the dual counterpart of Lemma~\ref{lem:m-height}.

\begin{lemma}
\label{lem:m-height-dual}
Using the notation of Theorem~\ref{thm:LP-dual},
given an $m$-subset $\Subset \subseteq \Int{n}$
and $i \in \Subset$, let $\blde'$ be a vector in $\EE_{\Subset,i}$
that satisfies the following two conditions.
\begin{list}{}{\settowidth{\labelwidth}{\textrm{(ii)}}}
\item[(i)]
It attains the inner minimum
\ifISIT
    in~(\ref{eq:LP-dual1}).
\else
in~(\ref{eq:LP-dual1}), i.e.,
\begin{equation}
\label{eq:LP-dual1again}
\| \blde' \|_1 = \min_{\blde \in \EE_{\Subset,i}} \| \blde \|_1 .
\end{equation}
\fi
\item[(ii)]
It has the smallest support, $\TT = \Support(\blde')$,
among all vectors in $\EE_{\Subset,i}$ that attain that minimum.
\end{list}
Then the columns in $(G)_\TT$ are linearly independent
(in particular, $|\TT| \le k$).

Equivalently, writing $\overline{\TT} = \Int{n{-}m} \setminus \TT$,
for any $(r{-}m) \times (n{-}m)$ parity-check matrix $B$ of
the punctured code $(\code)_{\overline{\Subset}}$,
\[
\rank \left( (B)_{\overline{\TT}} \right) = r-m .
\]
\end{lemma}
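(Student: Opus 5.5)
Suppose, to the contrary, that the columns of $(G)_\TT$ are linearly dependent; here $\TT$ indexes columns of $(G)_{\overline{\Subset}}$, i.e., $\TT \subseteq \Int{n{-}m}$. Then there is a nonzero $\bldx \in \Realfield^{n-m}$ with $\Support(\bldx) \subseteq \TT$ and $(G)_{\overline{\Subset}}\,\bldx = \bldzero$. Consequently, for every $\varepsilon \in \Realfield$ the vector
\[
\blde_\varepsilon = \blde' + \varepsilon \cdot \bldx
\]
stays in the affine space $\EE_{\Subset,i}$, and $\Support(\blde_\varepsilon) \subseteq \TT$. This mirrors the perturbation argument in the proof of Lemma~\ref{lem:m-height}, with the roles of primal and dual swapped.

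Restrict $\varepsilon$ to the interval $|\varepsilon| < \min_{j \in \Support(\bldx)} |e'_j/x_j|$. There, no entry of $\blde'$ indexed by $\Support(\bldx)$ changes sign. Hence $\|\blde_\varepsilon\|_1 = \sum_{j} |e'_j + \varepsilon x_j|$ is an affine function of $\varepsilon$, namely
\[
\|\blde'\|_1 + \varepsilon \sum_{j\in\TT} \sgn(e'_j)\, x_j .
\]
By condition~(i), $\blde'$ is a minimizer, and this interval is an open neighborhood of $\varepsilon = 0$. So the slope $\sum_{j} \sgn(e'_j) x_j$ must vanish, since otherwise a small $\varepsilon$ of the appropriate sign would give a smaller $L_1$ norm.

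Now enlarge $|\varepsilon|$ up to the endpoint
\[
\varepsilon^* = \pm \min_{j \in \Support(\bldx)} |e'_j/x_j| ,
\]
choosing the sign $-\sgn(e'_{j'}/x_{j'})$, where $j'$ attains the minimum. Throughout the closed interval no sign flips, so the $L_1$ norm stays constant by continuity. Thus $\blde_{\varepsilon^*}$ still attains the minimum in~(\ref{eq:LP-dual1again}). Moreover, $e'_{j'} + \varepsilon^* x_{j'} = 0$, so $\Support(\blde_{\varepsilon^*}) \subsetneq \TT$. This contradicts condition~(ii), and the columns of $(G)_\TT$ are therefore independent, which gives $|\TT| \le \rank(G) = k$. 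The main care is in checking that the linear regime (no sign changes) extends all the way to the endpoint; I expect this to be routine.

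For the equivalent formulation, I would use that $(G)_{\overline{\Subset}}$ is a parity-check matrix of the code generated by $B$ (when $m<d$, as in the discussion after Theorem~\ref{thm:LP-dual}). Then the columns of $(G)_\TT$ are dependent if and only if there is a nonzero codeword of $((\code)_{\overline{\Subset}})^\perp$ supported within $\TT$, i.e., a nonzero $\bldv B$ vanishing on $\overline{\TT}$. Because $B$ has full row rank $r-m$, this happens if and only if $\rank\bigl((B)_{\overline{\TT}}\bigr) < r-m$, by the same duality used in Lemma~\ref{lem:m-height}.
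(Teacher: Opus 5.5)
Your proof is correct and follows the paper's argument. You assume a dependency among the columns of $(G)_\TT$, perturb $\blde'$ along the resulting vector $\bldx \in ((\code)_{\overline{\Subset}})^\perp$ supported in $\TT$, and obtain a minimizer with strictly smaller support, contradicting~(ii). The only difference is how the zero entry is produced: the paper applies the weighted-median Lemma~\ref{lem:m=r-1} to $\varepsilon \mapsto \| \blde' - \varepsilon \bldx \|_1$, whereas you derive it directly (zero slope on the sign-preserving interval, then slide to its endpoint), which makes your version self-contained.
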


The lemma follows from
the fundamental theorem of linear programming, which states
that the minimum in the linear program~(\ref{eq:LP-dual2})
is attained
\ifISIT
\else
by a so-called basic feasible solution, namely,
\fi
by vectors $\bldy$ and $\bldz$ such that
the columns of $(G)_{\overline{\Subset}}$ that are indexed by
$\Support(\bldy) \cup \Support(\bldz)$ are linearly independent
(see~\cite[p.~19]{Luenberger}).
\ifISIT
\else
We next present
an alternate proof which is based on Lemma~\ref{lem:m=r-1};
our proof is a modification of the proof of
Theorem~1 in~\cite[\S 1.2]{BS}.

\begin{proof}[Proof of Lemma~\ref{lem:m-height-dual}]
Suppose to the contrary that
the columns in $(G)_\TT$ are linearly dependent
(or that $\rank \left( (B)_{\overline{\TT}} \right) < r-m$).
Then there exists a nonzero codeword
$\bldx \in ((\code)_{\overline{\Subset}})^\perp$
with support $\Support(\bldx) \subseteq \TT$.
Consider the vectors
\[
\blde' - \varepsilon \cdot \bldx ,
\quad
\varepsilon \in \Realfield .
\]
Their supports are all subsets of $\TT$;
moreover, these vectors are all in the same coset, $\EE_{\Subset,i}$, of
$((\code)_{\overline{\Subset}})^\perp$.
Therefore,
\begin{equation}
\label{eq:min}
\| \blde' \|_1
\stackrel{\textrm{(\ref{eq:LP-dual1again})}}{=}
\min_{\blde \in \EE_{\Subset,i}} \| \blde \|_1
\le
\min_{\varepsilon \in \Realfield}
\| \blde' - \varepsilon \cdot \bldx \|_1
\le \| \blde' \|_1 ,
\end{equation}
which means that the inequalities in~(\ref{eq:min})
are in fact equalities.
By Lemma~\ref{lem:m=r-1} it follows that
there is a minimizer $\varepsilon_{\min}$ in~(\ref{eq:min})
such that, at some position
$h \in \Support(\bldx) \; (\subseteq \TT)$,
the vector
$\blde'' = \blde' - \varepsilon_{\min} \cdot \bldx$ has
a zero entry. Hence,
\[
\Support(\blde'') \subseteq \TT \setminus \{ h \}
\subsetneq \TT ,
\]
namely, $\blde''$ attains the minima in~(\ref{eq:min})
and has a smaller support than $\blde'$.
This, however, contradicts condition~(ii).
\end{proof}
\fi

Lemma~\ref{lem:m-height-dual} implies the following
(dual) combinatorial characterization of
the $m$-height of a linear code.

\begin{theorem}
\label{thm:m-height-dual}
Let $\code$ be a linear $[n,k{=}n{-}r,d]$ code over $\Realfield$
and let $m \in \Int{1:d}$. Then the following holds.

\begin{list}{}{\settowidth{\labelwidth}{\textrm{(iii)}}}
\item[(i)]
For any $k \times n$ generator matrix
$G = (\bldg_0 \; \bldg_1 \; \ldots \; \bldg_{n-1})$ of $\code$
over $\Realfield$,
\begin{equation}
\label{eq:m-height-dual1}
\Height_m(\code)
=
\max_\Subset
\;
\max_{i \in \Subset}
\min_{\genfrac{}{}{0ex}{1}%
        {\II \in \Upsilon(\code) \,:}{\II \subseteq \overline{\Subset}}}
\;
\bigl\| (G)_\II^{-1} \bldg_i \bigr\|_1 ,
\end{equation}
where $\Subset$ ranges over all $m$-subsets of $\Int{n}$.
\item[(ii)]
For any $r \times n$ parity-check matrix $H$ of $\code$
over $\Realfield$,
\begin{equation}
\label{eq:m-height-dual2}
\Height_m(\code)
=
\max_\Subset
\;
\max_{i \in \Subset}
\min_{\genfrac{}{}{0ex}{1}%
          {\JJ \in \Upsilon(\code^\perp) \,:}{\JJ \supseteq \Subset}}
\;
\left\|
\left[ (H)_\JJ^{-1} \right]_i (H)_{\overline{\JJ}}
\right\|_1 ,
\end{equation}
where $\Subset$ is as in
\ifISIT
    part~(i).
\else
part~(i).\footnote{%
We index the rows of $(H)_{\JJ}^{-1}$ by the $r$-subset $\JJ$,
which is the set of column indexes of $(H)_\JJ$.}
\fi
\end{list}
\end{theorem}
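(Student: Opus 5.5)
Part~(i) follows from Theorem~\ref{thm:LP-dual} once the inner minimum over the affine space $\EE_{\Subset,i}$ is replaced by a minimum over finitely many candidates, one for each $\II \in \Upsilon(\code)$ with $\II \subseteq \overline{\Subset}$. Fix an $m$-subset $\Subset$ and $i \in \Subset$. For each such $\II$, define $\blde_\II \in \Realfield^{n-m}$ to equal $(G)_\II^{-1}\bldg_i$ on the positions corresponding to $\II$ and zero elsewhere. Then $(G)_{\overline{\Subset}}\,\blde_\II = (G)_\II (G)_\II^{-1}\bldg_i = \bldg_i$, so $\blde_\II \in \EE_{\Subset,i}$, and $\|\blde_\II\|_1 = \|(G)_\II^{-1}\bldg_i\|_1$. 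Hence the inner minimum in~(\ref{eq:LP-dual1}) is at most the inner minimum in~(\ref{eq:m-height-dual1}). The candidate set is nonempty: since $m<d$, Remark~\ref{rem:varyinggenerators} gives $\rank((G)_{\overline{\Subset}}) = k$, so some $k$-subset of $\overline{\Subset}$ lies in $\Upsilon(\code)$.

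For the reverse inequality, take a minimizer $\blde'$ of $\|\blde\|_1$ over $\EE_{\Subset,i}$ with smallest support $\TT$. Such a minimizer exists: the minimum is attained because it is the optimum of the feasible and bounded linear program~(\ref{eq:LP-dual2}), and one can then pick a minimizer of smallest support. By Lemma~\ref{lem:m-height-dual}, the columns of $(G)_{\overline{\Subset}}$ indexed by $\TT$ are linearly independent. Because $(G)_{\overline{\Subset}}$ has rank $k$, these columns can be completed to a $k$-subset $\II \supseteq \TT$ of $\overline{\Subset}$ whose columns form a nonsingular $(G)_\II$, so $\II \in \Upsilon(\code)$. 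The vector $\blde'$ is supported within $\II$ and satisfies $(G)_\II (\blde')_\II = \bldg_i$. By uniqueness of the solution, $(\blde')_\II = (G)_\II^{-1}\bldg_i$, so $\|\blde'\|_1 = \|(G)_\II^{-1}\bldg_i\|_1$. Thus the two inner minima coincide, and taking the outer maxima over $\Subset$ and $i$ gives~(\ref{eq:m-height-dual1}).

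Part~(ii) is a translation of~(i) through the duality between generator and parity-check matrices. Replacing $G$ by an invertible transformation of it does not change $\Upsilon(\code)$ or the vector $(G)_\II^{-1}\bldg_i$, which is the coordinate vector of $\bldg_i$ in the basis of columns indexed by $\II$. So we may use the systematic form $(G)_\II^{-1}G$. Set $\JJ = \overline{\II}$. Then $\II \subseteq \overline{\Subset}$ if and only if $\JJ \supseteq \Subset$, and $\II \in \Upsilon(\code)$ if and only if $\JJ \in \Upsilon(\code^\perp)$. Since $i \in \Subset \subseteq \JJ$, the vector $(G)_\II^{-1}\bldg_i$ is the column of $(G)_\II^{-1}(G)_{\overline{\II}}$ indexed by $i$. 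By the identity in Remark~\ref{rem:m-height}, $(H)_\JJ^{-1}(H)_{\overline{\JJ}} = -\bigl((G)_\II^{-1}(G)_{\overline{\II}}\bigr)^\top$. Hence this column is, up to sign, the transpose of the row of $(H)_\JJ^{-1}(H)_{\overline{\JJ}}$ indexed by $i$, which is $[(H)_\JJ^{-1}]_i (H)_{\overline{\JJ}}$. Its $L_1$ norm is unchanged by the sign and transpose, which yields~(\ref{eq:m-height-dual2}). The identity itself is verified by checking that the systematic generator matrix $\bigl(I \mid A\bigr)$ and the systematic parity-check matrix $\bigl(-A^\top \mid I\bigr)$, written with coordinates ordered as $(\II,\JJ)$, are orthogonal. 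Note that $H$ is assumed to have full rank $r$; otherwise $\Upsilon(\code^\perp)$ would have to be read with respect to a full-rank row basis.

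The main obstacle is the reverse inequality in part~(i): we need a minimizer whose support extends to a nonsingular $k$-subset inside $\overline{\Subset}$. This rests on Lemma~\ref{lem:m-height-dual}, so the remaining care lies in keeping the indexing of $\TT$ (as positions in $\Int{n{-}m}$) consistent with the actual coordinates in $\overline{\Subset}$.
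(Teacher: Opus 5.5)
Your proposal is correct and follows the paper's proof. Part~(i) combines Theorem~\ref{thm:LP-dual} with Lemma~\ref{lem:m-height-dual} and extends the support of a minimal-support minimizer to some $\II \in \Upsilon(\code)$ with $\II \subseteq \overline{\Subset}$. Part~(ii) then follows from part~(i) via the identity $(H)_\JJ^{-1}(H)_{\overline{\JJ}} = -\bigl((G)_\II^{-1}(G)_{\overline{\II}}\bigr)^\top$. You also spell out two things the paper leaves implicit: the easy direction, namely that each zero-padded $(G)_\II^{-1}\bldg_i$ lies in $\EE_{\Subset,i}$, and that the inner minima coincide for every pair $(\Subset,i)$ rather than only at the maximizers.
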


\begin{proof}
(i)~Let $\Subset \subseteq \Int{n}$ and $i \in \Subset$ be maximizers of
the outer maxima in~(\ref{eq:LP-dual1}).
By Lemma~\ref{lem:m-height-dual}, the inner minimum
in~(\ref{eq:LP-dual1}) is attained
at a vector $\blde' \in \EE_{\Subset,i}$
whose support, $\TT = \Support(\blde')$,
is such that the columns in $(G)_\TT$ are linearly independent.
Since $|\overline{\Subset}| = n-m \ge n-d+1 \ge k$,
the $k \times (n{-}m)$ matrix $(G)_{\overline{\Subset}}$
has full rank $k$ and, so, we can extend $\TT$ to
a $k$-subset $\II \subseteq \overline{\Subset}$
that belongs to $\Upsilon(\code)$. From
\[
\Support(\blde') \subseteq \II
\quad \textrm{and} \quad
(G)_{\overline{\Subset}} \, \blde'
\stackrel{\textrm{(\ref{eq:EE})}}{=} \bldg_i
\]
it follows that the vector $\blde'' = (G)_\II^{-1} \bldg_i$,
while being a subvector of $\blde'$, still has $\TT$ as its support.
Hence, $\| \blde'' \|_1 = \| \blde' \|_1$, which implies that
the right-hand sides of~(\ref{eq:LP-dual1})
and~(\ref{eq:m-height-dual1}) are equal.

\ifISIT
    (ii) For
\else
(ii)~We repeat the argument in Remark~\ref{rem:m-height}: for
\fi
every $\JJ \in \Upsilon(\code^\perp)$,
the complement $\II = \overline{\JJ}$ belongs to $\Upsilon(\code)$, and
\[
(H)_\JJ^{-1} (H)_{\overline{\JJ}}
=
-\left( (G)_\II^{-1} (G)_{\overline{\II}} \right)^\top .
\]
Hence, for every row index $i \in \JJ \; (\supseteq \Subset)$ of
$(H)_\JJ^{-1}$,
\[
\left\|
\left[ (H)_\JJ^{-1} \right]_i (H)_{\overline{\JJ}}
\right\|_1
=
\bigl\| (G)_\II^{-1} \bldg_i \bigr\|_1 ,
\]
and the result follows from part~(i).
\end{proof}

\ifISIT
\else
The total number of terms to maximize/minimize over
in~(\ref{eq:m-height-dual1})--(\ref{eq:m-height-dual2})
is at most
\begin{equation}
\label{eq:complexity-dual}
\binom{n}{m} \cdot m \cdot \binom{n-m}{k}
=
m \cdot \binom{n}{r} \cdot \binom{r}{m} .
\end{equation}
This figure is smaller than~(\ref{eq:complexity})
for high-rate codes (when $k \ge r + \Omega(\log r)$)
or when $m$ is either small or close to $r$.
Comparing the computations of~(\ref{eq:m-height-dual1})
and~(\ref{eq:m-height-dual2}),
the latter is preferable when $k > r$ (see Remark~\ref{rem:m-height}).

\begin{remark}
\label{rem:m-height-dual-m>=d}
When $m \ge d$, the inner minimizations
in Eqs.~(\ref{eq:m-height-dual1})--(\ref{eq:m-height-dual2})
are over empty sets for any $m$-subset $\Subset$ that is
a superset of the support of a minimum-weight codeword of $\code$;
in this case the inner minimum can be seen
as infinity. In contrast, we note that
Eqs.~(\ref{eq:m-height1}), (\ref{eq:m-height2}),
and~(\ref{eq:m-height3}), yield finite values even when $m \ge d$,
so Theorem~\ref{thm:m-height} clearly does not hold in this case.\qed
\end{remark}

The next corollary identifies an $m$-extremal codeword
in case the code is MDS.

\begin{corollary}
\label{cor:m-height-MDS}
Using the notation in Theorem~\ref{thm:m-height-dual},
suppose that $\code$ therein is MDS
and let $\Subset$, $i$, and $\II$ (respectively,
$\JJ = \overline{\II}$) be
where the maximum--minimum in~(\ref{eq:m-height-dual1})
(respectively, (\ref{eq:m-height-dual2}) is attained.
Let $\bldc$ be the unique codeword in $\code$
that is defined by
\begin{equation}
\label{eq:m-height-MDS1}
\bldc = (\bldc)_\II (G)_\II^{-1} G ,
\end{equation}
where
\begin{equation}
\label{eq:m-height-MDS2}
(\bldc)_\II =
\sgn \,
\bigl( (G)_\II^{-1} \bldg_i \bigr)^\top
=
- \sgn \,
\bigl(
\bigl[ (H)_\JJ^{-1} \bigr]_i (H)_{\overline{\JJ}}
\bigr) ,
\end{equation}
with $\sgn(\cdot)$ applied componentwise.
Then $\bldc$ is $m$-extremal in $\code$.
\end{corollary}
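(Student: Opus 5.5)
Write $\blde = (G)_\II^{-1}\bldg_i \in \Realfield^k$ and $h = \|\blde\|_1$. By Theorem~\ref{thm:m-height-dual}(i) and the choice of $(\Subset,i,\II)$, we have $h = \Height_m(\code)$, so it suffices to show $\Height_m(\bldc) \ge h$. The codeword $\bldc$ in~(\ref{eq:m-height-MDS1}) satisfies $(\bldc)_\II = \sgn(\blde)^\top$. Hence every entry of $\bldc$ on $\II$ has magnitude at most $1$, and entries on $\II\setminus\Support(\blde)$ may be $0$.

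The plan is to compute the coordinate $c_i$ directly. Since $\bldc = (\bldc)_\II (G)_\II^{-1} G$, we get
\[
c_i = (\bldc)_\II (G)_\II^{-1}\bldg_i = \sgn(\blde)^\top \blde = \|\blde\|_1 = h .
\]
So $\|\bldc\|_\infty \ge h$. It then remains to show $|c_{\pi(m)}| \le 1$, i.e., that at most $m$ entries of $\bldc$ exceed $1$ in magnitude. If $\bldc \in \code\vert_{\overline{\Subset}}$, Theorem~\ref{thm:LP} gives $\|\bldc\|_\infty \le \Height_m(\code)$, and then $\Height_m(\bldc) \ge \|\bldc\|_\infty/|c_{\pi(m)}| \ge h$ closes the argument.

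The main obstacle is proving $|c_j| \le 1$ for all $j \in \overline{\Subset}\setminus\II$. For this I would use LP duality for~(\ref{eq:LP2}). The vector $\blde$, extended by zeros to $\overline{\Subset}$, is optimal for the dual program~(\ref{eq:LP-dual2}) with value $h$; optimality comes from the min over $\II$ in~(\ref{eq:m-height-dual1}) together with Lemma~\ref{lem:m-height-dual}. The primal candidate $\bldu = \sgn(\blde)^\top (G)_\II^{-1}$ has objective value $h$ and satisfies complementary slackness. It is therefore primal optimal if and only if it is primal feasible. Feasibility on $\II$ is automatic. To get feasibility on the other coordinates of $\overline{\Subset}$, the intended route is to show that $\II$ is an optimal simplex basis. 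Here the MDS property is used: every $k$-subset of $\overline{\Subset}$ is in $\Upsilon(\code)$, so all bases are available. Then a violated constraint $|c_j| > 1$ for some $j \in \overline{\Subset}\setminus\II$ would give a pivot, exchanging $j$ with a suitable index of $\II$, to a new basis $\II'$ with $\|(G)_{\II'}^{-1}\bldg_i\|_1 < h$. This contradicts the minimality of $\II$ in~(\ref{eq:m-height-dual1}).

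The step that needs care is this pivot inequality, and it is the one I am least sure of. The plan is to expand $\bldg_j$ in the basis $(G)_\II$, note that $c_j = \sgn(\blde)^\top (G)_\II^{-1}\bldg_j$, and use the perturbation $\blde - t\,(G)_\II^{-1}\bldg_j + t\,\bldone_j$ for small $t$ of the appropriate sign. The first-order change in $L_1$ norm is $t(1 - |c_j|)$ when $\blde$ has full support; the zero-entry case will need extra care (degeneracy), and strict decrease is not guaranteed there. If degeneracy blocks the pivot, the fallback is to choose, among minimizing $\II$, one that is lexicographically optimal so that its basis is primal feasible. Given feasibility, the final equality
\[
\Height_m(\bldc) = \|\bldc\|_\infty = h = \Height_m(\code)
\]
follows. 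The second expression in~(\ref{eq:m-height-MDS2}) follows from the identity $(H)_\JJ^{-1}(H)_{\overline{\JJ}} = -\bigl((G)_\II^{-1}(G)_{\overline{\II}}\bigr)^\top$, as in the proof of Theorem~\ref{thm:m-height-dual}(ii).
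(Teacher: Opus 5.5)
\textbf{There is a genuine gap: you never use MDS to exclude the degenerate case, and that is exactly where the hypothesis is needed.} You also place the role of MDS in the wrong spot. You let $(\bldc)_\II$ contain zeros, and you concede that in this degenerate case your perturbation need not decrease the $L_1$ norm. The missing observation is that for an MDS code this case never arises.

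Write $\blde = (e_\ell)_{\ell \in \II}$. Suppose $e_\ell = 0$ for some $\ell \in \II$. Then $\bldg_i = (G)_\II \blde$ lies in the span of the $k-1$ columns $\bldg_{\ell'}$, $\ell' \in \II \setminus \{\ell\}$. Since $i \in \Subset$ and $\II \subseteq \overline{\Subset}$, this gives $k$ distinct linearly dependent columns of $G$. That is impossible, because $d^\perp = k+1$ forces every $k$ columns of $G$ to be linearly independent.

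This is precisely the first step of the paper's proof ($\Support(\blde') = \II$), and nondegeneracy is the only thing MDS is used for. The fact that every $k$-subset of $\overline{\Subset}$ is a basis plays no role. Without this step, your fallback of choosing a lexicographically optimal $\II$ would at best show that \emph{some} minimizing $\II$ yields an $m$-extremal codeword. The corollary asserts this for whichever $\II$ attains the minimum, so the proposal as written does not prove the statement.

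Once full support is in hand, your perturbation argument closes with no pivot at all. For $j \in \overline{\Subset} \setminus \II$, put $\bldv = (G)_\II^{-1} \bldg_j = (v_\ell)_{\ell \in \II}$, so that $c_j = \sgn(\blde)^\top \bldv$. Take $|t|$ small enough that $\sgn(e_\ell - t v_\ell) = \sgn(e_\ell)$ for all $\ell \in \II$. Define the vector equal to $\blde - t\bldv$ on $\II$, to $t$ at position $j$, and to $0$ elsewhere in $\overline{\Subset}$. It lies in $\EE_{\Subset,i}$, and its $L_1$ norm is exactly $h - t c_j + |t|$.

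If $|c_j| > 1$, take $t$ with the sign of $c_j$. This gives a value strictly below $h = \min_{\bldx \in \EE_{\Subset,i}} \|\bldx\|_1$, a contradiction. Hence $\bldc \in \code\vert_{\overline{\Subset}}$, and $c_i = h$ together with Theorem~\ref{thm:LP} finishes the proof.

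Compare this with the paper's route. The paper takes a maximizer $\bldu'$ of~(\ref{eq:LP2}), which exists because the feasible region is compact. It then uses complementary slackness against the nondegenerate dual optimum to force $\bldu' (G)_\II = \sgn(\blde)^\top$, so the candidate \emph{is} the primal optimum. Your patched route instead checks primal feasibility of the candidate directly, essentially a reduced-cost check, and avoids complementary slackness. Both arguments hinge on the same nondegeneracy.
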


\begin{proof}
As argued in the proof of Theorem~\ref{thm:m-height-dual},
the inner minimum in~(\ref{eq:LP-dual1}) is attained
at a vector $\blde' = (e'_j)_{j \in \overline{\Subset}}$
with support $\Support(\blde') \subseteq \II$.
We next claim the equality $\Support(\blde') = \II$.
Otherwise, from $(G)_{\overline{\Subset}} \, \blde' = \bldg_i$
we would have $k$ linearly dependent columns in $G$,
which is impossible since $\code$ is MDS.

The respective vectors
\[
\bldy' = (y'_j)_j = \max \{ \blde', \bldzero \}
\quad \textrm{and} \quad
\bldz' = (z'_j)_j = -\min \{ \blde', \bldzero \}
\]
(which satisfy $\bldy' - \bldz' = \blde'$) are minimizers of
the dual linear program~(\ref{eq:LP-dual2}).\footnote{%
In the terminology of linear programming, the MDS property,
which was used earlier, implies that
the dual linear program program~(\ref{eq:LP-dual2}) is nondegenerate.}
Letting $\bldu'$ be a maximizer of
the primal program~(\ref{eq:LP2}),
by the complementary slackness theorem
we have, for every $j \in \II$,
\[
\begin{array}{ccccl}
e'_j > 0 & \Longrightarrow & y'_j > 0
& \Longrightarrow & \bldu' \cdot \bldg_j = \phantom{-}1  \\
e'_j < 0 & \Longrightarrow & z'_j > 0
& \Longrightarrow & \bldu' \cdot \bldg_j = -1 .
\end{array}
\]
If follows that
\[
\bldu' (G)_\II = \sgn \, \bigl( (G)_\II^{-1} \bldg_i \bigr)^\top
\]
(which is a vector in $\{ \pm 1 \}^k$).
The vector $\bldc$
in~(\ref{eq:m-height-MDS1})--(\ref{eq:m-height-MDS2}),
in turn, equals $\bldu' G$ and is thus $m$-extremal in $\code$.
\end{proof}
\fi

\begin{example}
\label{ex:r=2}
We consider here the MDS construction from~\cite[\S V]{Roth1}.
Given an integer $n > 2$,
let $\alpha =  \pi/n$ and let $\omega$ denote
the complex primitive $(2n)$th root of unity $e^{\imath \alpha}$,
where $\imath = \sqrt{-1}$.
We define $\code(n)$ to be the following
linear $[n,n{-}2]$ code over $\Realfield$:
\[
\code(n) =
\left\{
\textstyle
\bldc = (c_j)_{j \in \Int{n}} \in \Realfield^n \,:\,
\sum_{j \in \Int{n}} c_j \omega^j = 0
\right\}
.
\]
This code is MDS and negacyclic
with a $2 \times n$ parity-check matrix
$H = (\bldh_j)_{j \in \Int{n}}$ whose columns are given by
\begin{equation}
\label{eq:hj}
\bldh_j =
\left(
\begin{array}{cc}
\cos (j \alpha) \\
\sin (j \alpha)
\end{array}
\right) ,
\quad
j \in \Int{n}
\ifISIT
    .
\fi
\end{equation}
\ifISIT
\else
(note that this matrix is different from the parity-check matrix
used in~\cite[\S V]{Roth1}).
\fi
Using purely geometric arguments, it was shown
in~\cite[Proposition~11]{Roth1} that
\begin{equation}
\label{eq:r=2,m=2}
\Height_2(\code(n))
= \frac{1}{2 \sin^2 (\alpha/2)} - 1
\ifISIT
    = \frac{1}{2} \csc^2 (\alpha/2) - 1
\fi
.
\end{equation}
We will re-derive this formula using
Theorem~\ref{thm:m-height-dual} and also show that
\begin{equation}
\label{eq:r=2,m=1}
\Height_1(\code(n)) =
\left\{
\ifISIT
\else
\renewcommand{\arraystretch}{1.5}
\fi
\begin{array}{lcl}
\ifISIT
    \csc (\alpha/2) - 1,
\else
\displaystyle
\frac{1}{\sin (\alpha/2)} - 1,
\fi
                               && \textrm{when $n$ is odd} \\
\cot (\alpha/2) - 1,           && \textrm{when $n$ is even} .
\end{array}
\right.
\end{equation}

Due to the negacyclic property of $\code(n)$, we can assume hereafter
that the maximum in~(\ref{eq:m-height-dual2}) is attained
at $i = 0$ (and at a subset $\Subset$ that contains the element $0$).
Let $\JJ = \{ 0, t \}$, for some $t \in \Int{1:n}$.
Then
\[
(H)_\JJ^{-1}
=
\left(
\arraycolsep0.5ex
\begin{array}{c@{\;\;\;}c}
1 & \cos (t \alpha) \\
0 & \sin (t \alpha)
\end{array}
\right)^{\!-1}
{\!\!\!\!\!\!} =
\frac{1}{\sin (t \alpha)}
\left(
\arraycolsep0.5ex
\begin{array}{cc}
\sin (t \alpha) & -\!\cos (t \alpha) \\
0               & 1
\end{array}
\right)
\]
and, therefore, for every $j \in \Int{n}$,
\ifISIT
    \begin{equation}
    \label{eq:r=2}
    \bigl[ (H)_\JJ^{-1} \bigr]_0 \, \bldh_j
    \;\;
    \stackrel{\textrm{(\ref{eq:hj})}}{=}
    \;\;
    \frac{\sin ((t{-}j) \alpha)}{\sin (t \alpha)} .
    \end{equation}
\else
\begin{eqnarray}
\bigl[ (H)_\JJ^{-1} \bigr]_0 \, \bldh_j
\!\! & = & \!
\frac{1}{\sin (t \alpha)}
\left(
\arraycolsep0.5ex
\begin{array}{cc}
\sin (t \alpha) & -\!\cos (t \alpha)
\end{array}
\right)
\left(
\arraycolsep0.5ex
\begin{array}{c}
\cos (j \alpha) \\ \sin (j \alpha)
\end{array}
\right) \nonumber \\
\label{eq:r=2}
& = & \!
\frac{\sin ((t{-}j) \alpha)}{\sin (t \alpha)} .
\end{eqnarray}
\fi
Now,
\begin{eqnarray*}
\lefteqn{
\sum_{j \in \Int{1:n} \setminus \{ t \}}
| \sin ((t{-}j) \alpha) |
=
\sum_{j \in \Int{1:n} \setminus \{ t \}} \sin (j \alpha)
} \makebox[2ex]{} \\
& = &
\Bigl( \sum_{j \in \Int{n}} \sin (j \alpha) \Bigr)
- \sin (t \alpha) = \cot (\alpha/2) - \sin (t \alpha) ,
\end{eqnarray*}
where the last step
\ifISIT
    follows from~\cite[\S 2.4.1.6, No.~13]{JeffreyDai}.
\else
follows from Lagrange's
trigonometric identities~\cite[\S 2.4.1.6, No.~13]{JeffreyDai}.
\fi
Combining this with~(\ref{eq:r=2}), we conclude that
\begin{equation}
\label{eq:r=2,t}
\left\|
\left[ (H)_\JJ^{-1} \right]_0 (H)_{\overline{\JJ}}
\right\|_1
=
\frac{\cot (\alpha/2)}{\sin (t \alpha)} - 1 .
\end{equation}

For $m = 2$, we substitute $\Subset = \JJ = \{ 0, t \}$
in~(\ref{eq:m-height-dual2}), which translates into
maximizing~(\ref{eq:r=2,t}) over $t \in \Int{1:n}$.
The maximum is attained at $t \in \{ 1, n{-}1 \}$,
\ifISIT
    thereby yielding~(\ref{eq:r=2,m=2}).
\else
which (with $\sin \alpha = 2 \sin (\alpha/2) \cos (\alpha/2)$)
yields~(\ref{eq:r=2,m=2}).
By Corollary~\ref{cor:m-height-MDS} and Eq.~(\ref{eq:r=2}),
one of the $2$-extremal codewords is all $1$'s
except for its first two entries, which are given by:
\[
c_0 = \Height_2(\code(n))
\quad \textrm{and} \quad
c_1 = -\Height_2(\code(n)) .
\]

\fi
For $m = 1$, we substitute $\Subset = \{ 0 \}$
and $\JJ = \{ 0, t \}$ in~(\ref{eq:m-height-dual2}), which
means that we need now
to minimize~(\ref{eq:r=2,t}) over $t \in \Int{1:n}$.
\ifISIT
    The minimum is attained at $t = n/2$ when $n$ is even, and at
    $t = (n \pm 1)/2$ when $n$ is odd, thereby
    establishing~(\ref{eq:r=2,m=1}).\qed
\else
When $n$ is even, the minimum is attained at $t = n/2$,
with a $1$-extremal codeword which is all $1$'s except
for the following two entries:
\[
c_0 = 0
\quad \textrm{and} \quad
c_{n/2} = -\Height_1(\code(n)) .
\]
When $n$ is odd, the minimum is attained at
$t = (n \pm 1)/2$
with a $1$-extremal codeword\footnote{%
Strictly speaking, these particular $1$-extremal codewords are
negacyclic shifts of the codewords that we get from
Corollary~\ref{cor:m-height-MDS} and Eq.~(\ref{eq:r=2}).}
which is all $1$'s except for the middle entry, which is:
\[
c_{(n-1)/2} = -\Height_1(\code(n)) .
\]
These results, in turn, establish~(\ref{eq:r=2,m=1}).\qed
\fi
\end{example}

\ifISIT
\else
\begin{remark}
\label{rem:r=2}
The symmetry of the code $\code(n)$ in Example~\ref{ex:r=2}
leads to conjecture that it may have the largest $2$-height
among all linear $[n,n{-}2]$ codes over $\Realfield$.
With respect to the $1$-height, however, a better code is obtained by
applying the simple construction in \cite[\S III.B]{Roth1}
with $r = 2$: the $2 \times n$ parity-check matrix then consists of
$\lfloor n/2 \rfloor$ copies of the column vector $(1 \; 0)^\top$
and $\lceil n/2 \rceil$ copies of $(0 \; 1)^\top$.
The $1$-height in this case is $\lceil n/2 \rceil - 1$,
compared to (at least) $\cot (\pi/(2n)) - 1 \approx (2/\pi) n  - 1$
for the code $\code(n)$.\qed
\end{remark}

\subsection{Height at the redundancy of the code}
\label{sec:r-height}

In this section, we look at the $m$-height of
a linear $[n,k{=}n{-}r,d]$ code $\code$ when $m = r$;
recall by~(\ref{eq:height-distance})
that the $r$-height is finite
if and only if $r \le d - 1$, which happens
(with equality) if and only if $\code$ is MDS.

The next corollary follows from Lemma~\ref{lem:structural}.

\begin{corollary}
\label{cor:structural}
Let $\code$ be a linear $[n,k{=}n{-}r]$ MDS code over $\Realfield$.
Then for any $r$-extremal codeword $\bldc$ in $\code$,
\[
\PP_r(\bldc) \supseteq \Int{r:n}
\]
and, in particular, $\weight(\bldc) = n$.
\end{corollary}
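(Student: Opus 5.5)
The plan is to apply Lemma~\ref{lem:structural} with $m = r$. This requires two facts about an MDS code: that $m=r$ lies in the range $\Int{1:d}$, and what $d^\perp$ is. Since $\code$ is MDS, $d = r+1$, so $r \in \Int{1:d}$ whenever $r \ge 1$. (If $r = 0$, the code is all of $\Realfield^n$ and $\PP_0$ is the degenerate case; I will assume $r \ge 1$, as is implicit in this section.) Next, the dual of an MDS code is MDS, so $\code^\perp$ is a linear $[n,r]$ MDS code and $d^\perp = n - r + 1 = k+1$. We may assume $k \ge 1$, since the paper always takes codes to be nontrivial; hence $d^\perp \ge 2$, and the hypothesis of Lemma~\ref{lem:structural} holds.

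Now let $\bldc$ be any $r$-extremal codeword and let $\pi$ be a sorting permutation. Lemma~\ref{lem:structural} gives
\[
|c_{\pi(r)}| = |c_{\pi(r+1)}| = \ldots = |c_{\pi(r+d^\perp-2)}| \ne 0 .
\]
Substituting $d^\perp - 2 = k-1$, the last index is $\pi(r+k-1) = \pi(n-1)$. Hence all positions $\pi(t)$ with $t \in \Int{r:n}$ have magnitude $|c_{\pi(r)}|$. In the paper's shorthand this is exactly $\PP_r(\bldc) \supseteq \Int{r:n}$, understood, as in Lemma~\ref{lem:structural}, via the identification through $\pi$: $\PP_r(\bldc)$ contains $\{\pi(t) : t\in\Int{r:n}\}$.

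For the weight claim, the smallest magnitude $|c_{\pi(n-1)}|$ equals $|c_{\pi(r)}| \ne 0$. Every entry of $\bldc$ therefore has magnitude at least $|c_{\pi(n-1)}| > 0$, so $\weight(\bldc) = n$.

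No step is a real obstacle. The only points needing care are recalling that the dual of an MDS code is MDS, which gives $d^\perp = k+1$, and checking the index arithmetic $r + d^\perp - 2 = n-1$. One could also handle the edge case $k=0$ separately, but it is excluded by the standing assumption $k>0$.
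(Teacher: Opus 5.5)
Your proposal is correct and follows the paper's own argument: the paper derives this corollary directly from Lemma~\ref{lem:structural} with $m = r$ and $d^\perp = k+1$, so that $r + d^\perp - 2 = n-1$. Your note that the statement implicitly requires $r \ge 1$ is also right, since for $r = 0$ the range $\Int{1:d}$ is empty. The paper additionally gives an alternate, duality-free proof in the Appendix, but your route is its primary one.
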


In the Appendix,
we present an alternate proof of Corollary~\ref{cor:structural}
that uses Theorem~\ref{thm:LP} directly without
relying on the theory of duality of linear programming.

\begin{remark}
\label{rem:structural}
It follows from Corollary~\ref{cor:structural}
that any shortening of a linear $[n,k{=}n{-}r]$ MDS code
strictly decreases the $r$-height of the code.\qed
\end{remark}

The next corollary follows from Theorem~\ref{thm:m-height-dual}.

\begin{corollary}
\label{cor:r-height}
Let $\code$ be a linear $[n,k{=}n{-}r]$ MDS code over $\Realfield$.
Then the following holds.
\begin{list}{}{\settowidth{\labelwidth}{\textrm{(iii)}}}
\item[(i)]
For any $k \times n$ generator matrix
$G = (\bldg_0 \; \bldg_1 \; \ldots \; \bldg_{n-1})$ of $\code$
over $\Realfield$,
\begin{equation}
\label{eq:r-height1}
\Height_r(\code) =
\max_\Subset
\max_{i \in \Subset}
\,
\bigl\| (G)_{\overline{\Subset}}^{-1} \bldg_i \bigr\|_1 ,
\end{equation}
where $\Subset$ ranges over all $r$-subsets of $\Int{n}$.
\item[(ii)]
For any $r \times n$ parity-check matrix $H$ of $\code$
over $\Realfield$,
\[
\Height_r(\code) =
\max_\Subset
\max_{i \in \Subset}
\,
\left\|
\bigl[ (H)_\Subset^{-1} \bigr]_i (H)_{\overline{\Subset}}
\right\|_1 ,
\]
where $\Subset$ is as in part~(i).
\item[(iii)]
Denoting by $\code_{\min}^\perp$ the set of all minimum-weight codewords
in $\code^\perp$ that have at least one entry equaling $1$,
\[
\Height_r(\code) =
-1 + \max_{\bldc \in \code_{\min}^\perp} \| \bldc \|_1 .
\]
\end{list}
\end{corollary}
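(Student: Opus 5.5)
The plan is to specialize Theorem~\ref{thm:m-height-dual} to $m=r$ and then reinterpret the resulting expression in terms of minimum-weight codewords of $\code^\perp$. Since $\code$ is MDS, we have $d = r+1$, so $m=r$ lies in $\Int{1:d}$ and Theorem~\ref{thm:m-height-dual} applies. Moreover, every $k$-subset of $\Int{n}$ belongs to $\Upsilon(\code)$, and every $r$-subset belongs to $\Upsilon(\code^\perp)$.

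For part~(i), note that when $m=r$ the complement $\overline{\Subset}$ has size exactly $k$. The inner minimization in~(\ref{eq:m-height-dual1}) is therefore over the single subset $\II=\overline{\Subset}$, and $(G)_{\overline{\Subset}}$ is invertible by the MDS property; this gives~(\ref{eq:r-height1}) immediately. Part~(ii) is obtained the same way from~(\ref{eq:m-height-dual2}): the only $r$-subset $\JJ \supseteq \Subset$ is $\JJ=\Subset$ itself.

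For part~(iii), I will interpret the row vectors appearing in~(ii). Fix an $r$-subset $\Subset$ and $i\in\Subset$, and let $\bldc = [ (H)_\Subset^{-1} ]_i H \in \code^\perp$. Then $(\bldc)_\Subset$ is the $i$th row of the identity matrix, so $c_i=1$, $c_j=0$ for $j\in\Subset\setminus\{i\}$, and $(\bldc)_{\overline{\Subset}} = [ (H)_\Subset^{-1} ]_i (H)_{\overline{\Subset}}$. Hence $\|\bldc\|_1 = 1 + \| [ (H)_\Subset^{-1} ]_i (H)_{\overline{\Subset}} \|_1$. Since $\code^\perp$ is MDS with minimum distance $k+1$, and $\bldc$ is nonzero with at least $r-1$ zero entries, we get $\weight(\bldc)\le n-r+1=k+1$. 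Therefore $\bldc$ has minimum weight and contains an entry equal to $1$, i.e., $\bldc\in\code_{\min}^\perp$.

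Conversely, let $\bldc\in\code_{\min}^\perp$ with $c_i=1$. Its zero set $\mathcal{Z}$ has size $r-1$, and we set $\Subset=\mathcal{Z}\cup\{i\}$. The vector $[ (H)_\Subset^{-1} ]_i H$ is a codeword of $\code^\perp$ that agrees with $\bldc$ on $\Subset$. Their difference is a codeword of $\code^\perp$ vanishing on $r$ positions, so it has weight at most $k < k+1$ and must be zero. Thus $\bldc$ coincides with the vector produced by the pair $(\Subset,i)$. This gives a correspondence between pairs $(\Subset,i)$ and elements of $\code_{\min}^\perp$ under which the quantity in~(ii) equals $\|\bldc\|_1-1$; taking maxima yields~(iii).

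The main point requiring care is this last bijection. In particular, a minimum-weight codeword may contain several entries equal to $1$, each giving a different pair $(\Subset,i)$. This causes no harm, because both maxima range over the same set of values. Everything else is direct substitution.
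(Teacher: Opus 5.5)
Your proposal is correct and follows the paper's proof. Parts (i) and (ii) come from the inner minima in Theorem~\ref{thm:m-height-dual} collapsing to the single subset $\overline{\Subset}$ (respectively, $\Subset$) when $m=r$, and part (iii) from the fact that the rows of $(H)_\Subset^{-1} H$ sweep out exactly $\code_{\min}^\perp$, each codeword appearing once per entry equal to $1$; your explicit check of this (surjective rather than bijective) correspondence just fills in details the paper leaves implicit.
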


\begin{proof}
(i)+(ii)~When $m = r$ (and $\code$ is MDS),
the inner minima in~(\ref{eq:m-height-dual1})
and~(\ref{eq:m-height-dual2}) are over sets of size $1$.

(iii)~We observe
that as $\Subset$ ranges over all the $r$-subsets of $\Int{n}$,
the rows of $(H)_\Subset^{-1} H$ range over all
the minimum-weight codewords in $\code_{\min}^\perp$
(a codeword $\bldc$ may appear as a row for
multiple $r$-subsets---one subset for each entry of $\bldc$
that equals $1$).
\end{proof}

\begin{remark}
\label{rem:r-height}
With the code $\code$ as in Corollary~\ref{cor:r-height},
every $(k{+}1)$-subset $\XX$ of $\Int{n}$
defines a one-dimensional subspace of $\code^\perp$
whose nonzero elements are minimum-weight codewords of $\code^\perp$
with support $\XX$ (and, as $\XX$ ranges over
all $(k{+}1)$-subsets of $\Int{n}$, we get this way
all the minimum-weight codewords of $\code^\perp$).
Hence,
\[
\left| \code_{\min}^\perp \right| \le (n-r+1) \cdot \binom{n}{n-r+1}
= r \cdot \binom{n}{r} .
\]
The right-hand side expression, in turn, is
what we get in~(\ref{eq:complexity-dual}) when $m = r$.\qed
\end{remark}

In the Appendix, we present
an alternate proof of Corollary~\ref{cor:r-height}(i)
that does not rely on the theory of duality of linear programming.
\fi

\section{Spherical parity-check codes and their duals}
\label{sec:spherical}

In this section, we consider codes with parity-check matrices $H$
whose columns are all unit vectors,
namely $\diag (H^\top H) = \bldone$.
We will say that a matrix over $\Realfield$ is \emph{spherical}
if it satisfies this property,
and a linear code over $\Realfield$ that has
such a parity-check matrix will be called
a \emph{spherical parity-check} (in short, SPC) code.

The code in Example~\ref{ex:r=2} is an SPC code,
and additional SPC codes have been proposed and
analyzed in~\cite{Roth1} and~\cite{WeiRoth}.
One of the constructions therein is a family
of linear $[n,n{-}r]$ codes over $\Realfield$
\ifISIT
    with $r \times n$ parity-check matrices $H$ whose entries are
    $\pm 1/\sqrt{r}$, and the values of the height profile decrease
    as the minimum angle between the columns increases
    (see Theorem~14 in~\cite{Roth1}
    and Theorems~11 and~13 in~\cite{WeiRoth}).
    Moreover, $H$ satisfies the following additional property.

\else
which are defined
by means of a linear $[r,\kappa,\partial]$ code $B$ over $\Field_2$
that contains the codeword $\bldone$ and, in addition,
its dual code, $B^\perp$, has minimum distance${} > 2$.
The induced linear $[n,n{-}r]$ code, $\code_B$, over $\Realfield$
has length $n = 2^{\kappa-1}$
and an $r \times n$ parity-check matrix $H_B$ whose columns are
\[
(1/\sqrt{r}) \cdot \bigl( (-1)^{x_j} \bigr)_{j \in \Int{r}} ,
\]
where $\bldx = (x_j)_{j \in \Int{r}}$ ranges over all
the codewords in $B$ that start with a $0$.
The set of columns of $H_B$,
in turn, forms a spherical code in $\Realfield^r$
whose minimum angle between its elements (i.e., its packing property)
increases with $\partial$ and, consequently,
for every $m \le \lceil r/(r{-}2\partial) \rceil$,
the $m$-height of $\code_B$ decreases with $\partial$:
see Theorem~14 in~\cite{Roth1} and Theorems~11 and~13 in~\cite{WeiRoth}.
This allows us to speculate that, as a rule of thumb,
good packing spherical codes in $\Realfield^r$
yield SPC codes with good (i.e., small) $m$-heights.

The requirements from the code $B$ imply that the $r$ rows of $H_B$
are orthogonal and have the same $L_2$ norm (which then has to be
$\sqrt{n/r}$, since all the columns in $H_B$ are unit vectors);
equivalently, $H_B \, H_B^\top = (n/r) \cdot I_r$.
\fi
We say that an $r \times n$ matrix $H$ is \emph{ortho-spherical} if
\[
\diag(H^\top H) = \bldone
\quad \textrm{and} \quad
H \, H^\top = (n/r) \cdot I_r ,
\]
and a linear $[n,n{-}r]$ code over $\Realfield$
that has such a parity-check matrix will be called
an \emph{ortho-spherical parity-check} (OSPC) code.
The code in Example~\ref{ex:r=2} is, too, an OSPC code.
In frame theory and signal processing, (the set of columns of)
an $r \times n$ ortho-spherical matrix $H$ is equivalent
to a \emph{unit-norm tight frame},
and the scaled matrix $\sqrt{r/n} \cdot H$ is equivalent
to an \emph{equal-norm Parseval frame};
see~\cite{CFMPS},\cite{CKP},\cite{Daubechies},\cite{Waldron}.

The next (known) lemma states that the OSPC property is preserved under
duality.
\ifISIT
\else
    For completeness, we include a proof in the Appendix.
\fi

\begin{lemma}[{\cite[Proposition~3.1(c)]{CFMPS}}]
\label{lem:orthogonal}
A linear code is an OSPC code, if and only if its dual code is.
\end{lemma}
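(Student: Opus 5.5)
By symmetry it suffices to show one direction: if $\code$ is OSPC, then so is $\code^\perp$. Indeed, $(\code^\perp)^\perp = \code$, so the converse follows by applying the same argument to $\code^\perp$.

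\textbf{Setup.} Let $H$ be an $r \times n$ ortho-spherical parity-check matrix of $\code$, so that
\[
\diag(H^\top H) = \bldone
\quad\textrm{and}\quad
H H^\top = (n/r)\, I_r .
\]
Put $k = n - r$. The second identity shows that $H$ has full rank $r$. Hence $P = (r/n)\, H^\top H$ is the orthogonal projection of $\Realfield^n$ onto the row space of $H$, which is $\code^\perp$ viewed as a space of column vectors. Then $I_n - P$ is the orthogonal projection onto $\code$, a subspace of rank $k$.

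\textbf{Construction.} Choose a $k \times n$ matrix $G'$ whose rows form an orthogonal basis of $\code$, each row scaled to have $L_2$ norm $\sqrt{n/k}$. Then
\[
G' G'^\top = (n/k)\, I_k
\quad\textrm{and}\quad
(k/n)\, G'^\top G' = I_n - P .
\]
$G'$ is a generator matrix of $\code$, hence a parity-check matrix of $\code^\perp$, and it already satisfies the second ortho-spherical condition.

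\textbf{Spherical condition.} It remains to check that the diagonal of $G'^\top G'$ is $\bldone$. Taking diagonals,
\[
\diag(G'^\top G') = (n/k)\bigl(\bldone - \diag(P)\bigr)
= (n/k)\bigl(\bldone - (r/n)\,\bldone\bigr)
= (n/k)\cdot(k/n)\,\bldone
= \bldone ,
\]
where the middle step uses $\diag(H^\top H) = \bldone$. So $G'$ is ortho-spherical, and $\code^\perp$ is an OSPC code.

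\textbf{Degenerate cases.} We need $k > 0$ (assumed throughout the paper) and $r > 0$ so that the normalization $n/r$ is defined. If $r = 0$ there is no parity-check matrix to speak of, and these edge cases can be noted and set aside.

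The only point needing care is identifying $(r/n) H^\top H$ as the orthogonal projection onto the row space. This follows because it is symmetric and idempotent:
\[
\bigl((r/n) H^\top H\bigr)^2 = (r/n)^2\, H^\top (H H^\top) H = (r/n)\, H^\top H ,
\]
and its range is the row space of $H$. Everything else is a short computation.
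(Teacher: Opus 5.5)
Your proof is correct and is essentially the paper's argument. The paper completes the orthonormal rows of $\sqrt{r/n}\cdot H$ to an orthonormal basis of $\Realfield^n$ and reads off $I_n = (r/n)\cdot H^\top H + (k/n)\cdot G^\top G$, which is your decomposition into the projections $P$ and $I_n - P$ (phrased via the orthonormal matrix $M$ rather than projections), followed by the same diagonal computation.
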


\ifISIT
\else
Letting $G$ be a $k \times n$ ortho-spherical generator matrix of
an $[n,k]$ OSPC code $\code$, we have
$G \, G^\top  = (n/k) \cdot I_k$; hence,
for every $\bldu \in \Realfield^k$,
\[
\| \bldu G \|_2^2 = \bldu G \, G^\top \bldu^\top
= (n/k) \cdot \| \bldu \|_2^2 ,
\]
namely, a (non-systematic) encoder
$\bldu \mapsto \sqrt{k/n} \cdot \bldu G$ is norm reserving.
\fi

The following two examples present additional OSPC codes.

\begin{example}
\label{ex:icosahedron}
Let $\code_\ico$ be the $[6,3]$ MDS code with the parity-check matrix:
{%
\ifISIT
    \small%
\fi
\[
H =
\frac{1}{\sqrt{\varphi+2}}
\left(
\begin{array}{rrrrrr}
0       & \phantom{-}\varphi & 1 & 0        & 1         & -\varphi \\
1       & 0 & \phantom{-}\varphi & 1        & -\varphi  & 0        \\
\varphi & 1 & 0                  & -\varphi & 0         & 1
\end{array}
\right) ,
\]%
}%
where $\varphi$ stands for the golden ratio $(1 + \sqrt{5})/2$.
The columns of $H$ range over six out of the $12$ vertices---one from
each antipodal pair---of a regular icosahedron in $\Realfield^3$
that is inscribed in the unit sphere $\Sphere^2$.
Moreover, it is easy to check that $H \, H^\top = \sqrt{2} \cdot I_3$,
namely, $\code_\ico$ is a $[6,3]$ OSPC code,
and so is its dual $\code_\ico^\perp$.
In fact, $\code_\ico$ is ``skew self-dual'':
a generator matrix can be obtained by negating
the first three columns of $H$ and then swapping them with
the last three columns.
Table~\ref{tab:icosahedron} presents the $m$-heights of $\code_\ico$
(and of $\code_\ico^\perp$), as computed by
Theorems~\ref{thm:m-height}
\ifISIT
    or~\ref{thm:m-height-dual}.\qed
\else
or~\ref{thm:m-height-dual}
(see also Example~\ref{ex:icosahedron-dual} below).\qed
\fi
\end{example}
\begin{table}[hbt]
\ifISIT
    \vspace{-3ex}
\fi
\caption{$m$-heights of the icosahedral code $\code_\ico$
in Example~\protect\ref{ex:icosahedron}.}
\label{tab:icosahedron}
\ifISIT
    \vspace{-4ex}
    \arraycolsep1.5ex
    \[
    \begin{array}{c|cc}
    m  & 1, 2 & 3  \\
    \hline
        \Height_m(\code_\ico)
        & \sqrt{5}
        & 2{+} \sqrt{5} \\
        \textrm{approx.}
        & 2.2361 & 4.2361
    \end{array}
    \]
    \vspace{-3ex}
\else
\[
\arraycolsep3ex
\begin{array}{ccc}
\hline\hline
m     & \Height_m(\code_\ico) & \textrm{approx.} \\
\hline
1, 2  & \sqrt{5}              & 2.2361 \\
3     & 2 +  \sqrt{5}         & 4.2361 \\
\hline\hline
\end{array}
\]%
\fi
\end{table}

\begin{example}
\label{ex:dodecahedron}
Let $\code_\dod$ be the $[10,7]$ MDS code with the parity-check matrix:
{%
\ifISIT
    \small%
\fi
\[
H =
\frac{1}{\sqrt{3}}
\left(
\arraycolsep0.6ex
\begin{array}{rrrrrrrrrr}
1  & 1  & 1  & 1  &
0       & \phantom{-}\psi   & \varphi & 0       & -\psi   & \varphi \\
1  & 1  & -1 & -1 &
\phantom{-}\psi   & \varphi & 0       & -\psi   & \varphi & 0       \\
1  & -1 & 1  & -1 &
\varphi & 0       & \phantom{-}\psi   & \varphi & 0       & -\psi
\end{array}
\right) ,
\]%
}%
where $\varphi = (1 + \sqrt{5})/2$ and
$\psi = -1/\varphi = (1 - \sqrt{5})/2$.
The columns of $H$ range over ten out of the $20$ vertices
(one from each antipodal pair) of
a regular dodecahedron in $\Realfield^3$
that is inscribed in $\Sphere^2$; also,
$H \, H^\top = \sqrt{10/3} \cdot I_3$
and, so, $\code_\dod$ is an OSPC code.
Table~\ref{tab:dodecahedron} presents the $m$-heights of $\code_\dod$
and Table~\ref{tab:dodecahedron-dual} presents the $m$-heights of
its $[10,3]$ dual
\ifISIT
    code.\qed
\else
code (see also Example~\ref{ex:dodecahedron-dual} below).\qed
\fi
\end{example}
\begin{table}[hbt]
\ifISIT
    \vspace{-3ex}
\fi
\caption{$m$-heights of the dodecahedral code $\code_\dod$
in Example~\protect\ref{ex:dodecahedron}.}
\label{tab:dodecahedron}
\ifISIT
    \vspace{-4ex}
    \arraycolsep1.5ex
    \[
    \begin{array}{c|ccc}
    m  & 1 & 2 & 3  \\
    \hline
        \Height_m(\code_\dod)
        & 2{+}\sqrt{5}
        & 4{+} \sqrt{5}
        & 9{+} 4 \sqrt{5} \\
        \textrm{approx.}
        & 4.2361 & 6.2361 & 17.9443
    \end{array}
    \]
    \vspace{-3ex}
\else
\[
\arraycolsep3ex
\begin{array}{ccc}
\hline\hline
m     & \Height_m(\code_\dod) & \textrm{approx.} \\
\hline
1     & 2 +  \sqrt{5}         & 4.2361  \\
2     & 4 +  \sqrt{5}         & 6.2361  \\
3     & 9 + 4 \sqrt{5}        & 17.9443 \\
\hline\hline
\end{array}
\]%
\fi
\end{table}
\begin{table}[hbt]
\caption{$m$-heights of the dual dodecahedral code $\code_\dod^\perp$
in Example~\protect\ref{ex:dodecahedron}.}
\label{tab:dodecahedron-dual}
\ifISIT
    \vspace{-5ex}
    \arraycolsep1.2ex
    \[
    \begin{array}{c|cccccc}
    m  & 1 & 2 & 3 & 4 & 5, 6 & 7  \\
    \hline
        \Height_m(\code_\dod^\perp)\rule[-1.2ex]{0ex}{3.4ex}
        & 3/\sqrt{5}
        & (1{+}\sqrt{5})/2
        & 4{-} \sqrt{5} & 3
        & 2{+} \sqrt{5}
        & 5{+}2\sqrt{5} \\
        \textrm{approx.}
        & 1.3416 & 1.6180 & 1.7639 & 3.0000 & 4.2361 & 9.4721
    \end{array}
    \]
    \vspace{-2ex}
\else
\[
\arraycolsep3ex
\begin{array}{ccc}
\hline\hline
m     & \Height_m(\code_\dod^\perp)\rule[-1.2ex]{0ex}{3.4ex}
                                    & \textrm{approx.} \\
\hline
1     & 3/\sqrt{5}                  & 1.3416 \\
2     & (1+ \sqrt{5})/2             & 1.6180 \\
3     & 4 -  \sqrt{5}               & 1.7639 \\
4     & 3                           & 3.0000 \\
5, 6  & 2 +  \sqrt{5}               & 4.2361 \\
7     & 5 + 2\sqrt{5}               & 9.4721 \\
\hline\hline
\end{array}
\]%
\fi
\end{table}

\ifISIT
\else
\begin{remark}
\label{rem:platonic}
In addition to the dodecahedron and the icosahedron,
there are three more platonic solids, yet they do not
yield interesting constructions: both the tetrahedron and the cube yield
the $[4,1]$ repetition code, and the octahedron yields the trivial
$[3,0]$ code.\qed
\end{remark}

\begin{remark}
\label{rem:ruthishauer}
A construction due to Ruthishauser~\cite{Ruthishauser}
yields a $20$-vertex polyhedron with a minimum angle
(between vertices) that is larger
than that in the regular dodecahedron. This polyhedron has
several symmetries, including closure under antipodality.
Let $\code_\rut$ be the linear $[10,7]$ code
with a $3 \times 10$ parity-check matrix whose columns are obtained
by selecting one vertex from each antipodal pair in
Ruthishauser's construction.
The code $\code_\rut$ turns out to be an OSPC code, yet it is not MDS
(neither is its $[10,3]$ dual $\code_\rut^\perp$) and, therefore,
$\Height_3(\code_\rut) = \Height_7(\code_\rut^\perp) = \infty$.
Still,
$\Height_1(\code_\rut) = 1 + 2 \sqrt{2} \approx 3.8284$ and
$\Height_2(\code_\rut) = 3 + 2 \sqrt{2} \approx 5.8284$,
which are smaller than the respective heights of $\code_\dod$,
and
$\Height_4(\code_\rut^\perp) = \sqrt{2/3} + \sqrt{3} \approx 2.5485$ and
$\Height_5(\code_\rut^\perp) = \sqrt{3/2} + \sqrt{3} \approx 2.9567$,
which are smaller than the respective heights of $\code_\dod^\perp$.
An improvement to Ruthishauser's construction due
to van der Waerden~\cite{Waerden} breaks its symmetries,
including the closure under antipodality.\qed
\end{remark}
\fi

We now turn to consider the duals of SPC codes
(including all OSPC codes).
Let $\code$ be such a linear $[n,k,d]$ code over $\Realfield$,
namely, $\code$ has a spherical $k \times n$ generator matrix
$G = (\bldg_j)_{j \in \Int{n}}$. We associate with $G$ the multiset
\begin{equation}
\label{eq:GG}
\GG = \left\{ \bldg_j \right\}_{j \in \Int{n}} \cup
\left\{ -\bldg_j \right\}_{j \in \Int{n}} ,
\end{equation}
which contains $2n$ vectors (or points) in $\Sphere^{k-1}$
(which are not necessarily distinct).
Also, let $m \in \Int{1:d}$.
\ifISIT
    For such a code $\code$, Lemma~\ref{lem:structural} implies
\else

Consider any codeword
$\bldc = (c_j)_{j \in \Int{n}} = \bldu G$
where (without loss of generality) $\bldu \in \Sphere^{k-1}$.
Then,
\[
c_j = \bldu \cdot \bldg_j , \quad j \in \Int{n} .
\]
Letting $\pi(\cdot)$ be a sorting permutation of $\bldc$, we have
\[
\Height_m(\bldc)
=
\left|
\frac{\bldu \cdot \bldg_{\pi(0)}}{\bldu \cdot \bldg_{\pi(m)}}
\right| .
\]
Thus, to get a small $m$-height of $\code$,
we would like $|\bldu \cdot \bldg_{\pi(m)}|$
to be sufficiently close to $|\bldu \cdot \bldg_{\pi(0)}|$
for every vector $\bldu \in \Sphere^{k-1}$.
Yet $|\bldu \cdot \bldg_j|$ is the cosine of
the angle between $\bldu$ and either
$\bldg_j$ or $-\bldg_j$.
This means that for each vector $\bldu \in \Sphere^{k-1}$,
we would like to see small disparity among
the (Euclidean) distances between $\bldu$ and the $m+1$ points in $\GG$
that are the closest to $\bldu$.
One possible strategy to achieve this goal is requiring
that $\GG$ forms a good
$(m{+}1)$-fold $\theta$-covering spherical code\footnote{%
Earlier, we mentioned the relation between
the height profile of an SPC code and
the \emph{packing} properties of the spherical code
that is formed by the columns of its spherical parity-check matrix,
which is also the generator matrix of its dual code.
Here, when considering the dual code, we refer instead
to the \emph{covering} properties of these columns.}
in $\Sphere^{k-1}$.
Under these conditions, every vector $\bldu \in \Sphere^{k-1}$ is at
angular distance${} \le \theta$ from
at least $m+1$ points in $\GG$ and, so,
\[
\Height_m(\code) \le \frac{1}{\cos \theta} .
\]

For a linear $[n,k,d]$ code $\code$ whose dual code
is an $[n,n{-}k,d^\perp]$ SPC code and for $m \in \Int{1:d}$,
Lemma~\ref{lem:structural} implies\footnote{%
Since $\code^\perp$ is an SPC code, none of its parity-check matrices
has an all-zero column and, therefore, the condition $d^\perp \ge 2$
in the lemma is met. For the case $d^\perp = 2$,
the upcoming discussion becomes trivial.}
\fi
the following
necessary geometric condition on the $m$-extremal codewords of $\code$.
If $\bldc = \bldu G$ is such a codeword, then there must be
a $(d^\perp{-}1)$-subset $\GG_\bldu \subseteq \GG$ such that
the dot products $\bldu \cdot \bldg$
are equal for all $\bldg \in \GG_\bldu$;
moreover, the elements in $\GG_\bldu$ are linearly
\ifISIT
    independent. This, in turn, leads to the following theorem.
\else
independent,
thereby defining a $(d^\perp{-}2)$-dimensional affine hyperplane
$\AA_\bldu$ in $\Realfield^k$.
This means that the point $\bldu$ is equidistant---in Euclidean
as well as angular terms---from all the points in $\GG_\bldu$.
Equivalently, $\bldu$ points toward
the center of the (unique) hypersphere in $\AA_\bldu$
that passes through all the points of $\GG_\bldu$.

Similarly, by Lemma~\ref{lem:m-height}, there exists
an $m$-extremal codeword $\bldc = \bldu G$ where $\bldu$
is equidistant from $k$ linearly independent points in $\GG$.
Here, $\bldu$ points toward the center of the hypersphere in
the $(k{-}1)$-dimensional affine hyperplane
that passes through these $k$ points.
(Recall that when $\code$ is MDS then $d^\perp = k+1$
and so, by Lemma~\ref{lem:structural}, this condition on $\bldu$
must hold for every $m$-extremal codeword $\bldu G$.)

The next theorem summarizes our previous discussion.
\fi

\begin{theorem}
\label{thm:SPC-structural}
Let $\code$ be a linear $[n,k,d]$ code over $\Realfield$
whose dual is an $[n,n{-}k,d^\perp]$ SPC code,
let $G$ be a $k \times n$ spherical generator matrix of $\code$,
and let $\GG$ be defined as in~(\ref{eq:GG}).
\ifISIT
    Then for any $m$-extremal codeword $\bldc = \bldu G$ in $\code$,
    the point $\bldu$ (in $\Realfield^k$) is equidistant from at least
    $d^\perp - 1$ linearly independent points in $\GG$.
\else
Then the following holds for every $m \in \Int{1:d}$.
\begin{list}{}{\settowidth{\labelwidth}{\textrm{(ii)}}}
\item[(i)]
For any $m$-extremal codeword $\bldc = \bldu G$ in $\code$,
the point $\bldu$ (in $\Realfield^k$) is equidistant from at least
$d^\perp - 1$ linearly independent points in $\GG$.
\item[(ii)]
There exists an $m$-extremal codeword $\bldc = \bldu G$ in $\code$
such that the point $\bldu$ is equidistant from at least
$k$ linearly independent points in $\GG$.
\end{list}
\fi
\end{theorem}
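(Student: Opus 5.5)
The plan is to translate Lemma~\ref{lem:structural} (for part~(i)) and Lemma~\ref{lem:m-height} (for part~(ii)) into the geometry of $\GG$. Both translations rest on one observation. Let $\bldc = \bldu G$ with $c_j = \bldu \cdot \bldg_j$. Suppose $\PP_m(\bldc) \supseteq \XX$ for some index set $\XX$, and write $\rho = |c_{\pi(m)}| \neq 0$. For each $j \in \XX$ we have $|\bldu \cdot \bldg_j| = \rho$. Choose the sign $s_j = \sgn(c_j) \in \{\pm 1\}$, so that $\bldu \cdot (s_j \bldg_j) = \rho$ for all $j \in \XX$. Every point $s_j \bldg_j$ lies in $\GG$ and on $\Sphere^{k-1}$, so
\[
\| \bldu - s_j \bldg_j \|_2^2 = \| \bldu \|_2^2 + 1 - 2\rho ,
\]
which does not depend on $j$. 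Hence $\bldu$ is equidistant from the points $\{ s_j \bldg_j \}_{j \in \XX}$. Sign flips do not change linear (in)dependence, so these points are linearly independent exactly when the columns $\{\bldg_j\}_{j \in \XX}$ are.

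For part~(i), let $\bldc = \bldu G$ be an arbitrary $m$-extremal codeword. First I check the hypothesis $d^\perp \ge 2$ of Lemma~\ref{lem:structural}. Since $\code^\perp$ is an SPC code, $G$ is a parity-check matrix of $\code^\perp$, and all its columns are unit vectors; in particular none is zero, so $d^\perp \ge 2$. Lemma~\ref{lem:structural} then gives a set $\XX = \{ \pi(t) : t \in \Int{m:m{+}d^\perp{-}1} \}$ of $d^\perp - 1$ indices inside $\PP_m(\bldc)$, and $|c_{\pi(m)}| \ne 0$. Next I need the corresponding columns of $G$ to be independent. Because $G$ is a parity-check matrix of $\code^\perp$, any $d^\perp - 1$ of its columns are linearly independent. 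The observation above then finishes part~(i).

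For part~(ii), I take an $m$-extremal codeword $\bldc = \bldu G$ for which $\TT = \PP_m(\bldc)$ is largest among all $m$-extremal codewords. Such a codeword exists, because $|\PP_m(\cdot)|$ takes only finitely many values and $m$-extremal codewords exist (the maximum in Theorem~\ref{thm:LP} is attained). Lemma~\ref{lem:m-height} gives $\rank((G)_\TT) = k$, so I can choose $k$ indices in $\TT$ whose columns are linearly independent. Since $m < d$, we have $c_{\pi(m)} \ne 0$. Applying the observation to these $k$ indices gives the claim.

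The only step needing care is the sign bookkeeping. The equidistant points must be the signed vectors $s_j\bldg_j \in \GG$, not the columns $\bldg_j$ themselves. Since $\GG$ is a multiset, I also have to note that the chosen points are distinct elements, which follows from their linear independence. I expect no real obstacle beyond this; both parts are direct reformulations of the two lemmas.
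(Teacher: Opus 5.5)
Your proposal is correct and follows essentially the same route as the paper: part~(i) from Lemma~\ref{lem:structural}, part~(ii) from Lemma~\ref{lem:m-height}, with equal magnitudes $|\bldu \cdot \bldg_j|$ turned into equal distances to the signed unit vectors $\pm\bldg_j \in \GG$, and with the fact that any $d^\perp-1$ columns of $G$ are linearly independent. Your extra checks are details the paper leaves to a footnote or leaves implicit: $d^\perp \ge 2$ because $G$ has no zero column, the existence of an $m$-extremal codeword with largest $\PP_m$, and the distinctness of the chosen points in the multiset $\GG$.
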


\begin{example}
\label{ex:k=2}
Let $\code^\perp(n)$ be the linear $[n,2,n{-}1]$ code which is
the dual of the negacyclic code $\code(n)$ in Example~\ref{ex:r=2},
let $G = (\bldg_j)_{j \in \Int{n}}$ be
its $2 \times n$ generator matrix whose columns are given
by~(\ref{eq:hj}), and let $\GG$ be defined as in~(\ref{eq:GG}).
The $2n$ points in $\GG$ form the vertices of a regular $(2n)$-gon
in $\Realfield^2$. By Theorem~\ref{thm:SPC-structural},
for any $m \in \Int{1:n{-}1}$,
the point $\bldu \in \Realfield^2$ that corresponds to
an $m$-extremal codeword $\bldu G$
must be equidistant from two vertices of this polygon.
Due to the $(2n)$-fold rotational symmetry of $\GG$,
the vector $\bldu$ either points toward one of the vertices of $\GG$
\ifISIT
\else
(say, $\bldg_0$)
\fi
or toward the midpoint of an edge of
\ifISIT
    the polygon.
    Based on these observations, it is rather easy to see
    that the first (respectively, second) case yields
    an $m$-extremal codeword when $m$ is odd (respectively, even).
    Denoting $\beta = \alpha/2 = \pi/(2n)$, we end up with the following
    expression for the $m$-height of $\code^\perp(n)$:
\else
the polygon (say, the edge with endpoints $\bldg_0$ and $\bldg_1$).

Letting $\beta = \alpha/2 = \pi/(2n)$, in the first case we get
\[
\bldc_1
= \bigl( 1 \;\; 0 \bigr) \, G
= \bigl( \cos (2 j \beta) \bigr)_{j \in \Int{n}}
\]
and, respectively,
\begin{equation}
\label{eq:case1}
\Height_m(\bldc_1) = \frac{1}{\cos (2 \lceil m/2 \rceil \beta)} ,
\quad
m \in \Int{n{-}1} .
\end{equation}
In the second case,
\[
\bldc_2
= \bigl( \cos \beta \;\; \sin \beta \bigr) \, G
= \bigl( \cos ((2j{-}1) \beta) \bigr)_{j \in \Int{n}}
\]
and, respectively,
\begin{equation}
\label{eq:case2}
\Height_m(\bldc_2)
= \frac{\cos \beta}{\cos ((2 \lfloor m/2 \rfloor{+}1) \beta)} ,
\quad
m \in \Int{n{-}1} .
\end{equation}
Hence, when $m$ is odd,
\[
\Height_m(\bldc_1)
\stackrel{\textrm{(\ref{eq:case1})}}{=}
\frac{1}{\cos ((m{+}1) \beta)}
>
\frac{\cos \beta}{\cos (m \beta)}
\stackrel{\textrm{(\ref{eq:case2})}}{=}
\Height_m(\bldc_2)
\]
(since $1 > \cos \beta$ and $\cos ((m{+}1) \beta) < \cos (m \beta)$),
and when $m$ is even,
\[
\Height_m(\bldc_1)
\stackrel{\textrm{(\ref{eq:case1})}}{=}
\frac{1}{\cos (m \beta)}
<
\frac{\cos \beta}{\cos ((m{+}1) \beta)}
\stackrel{\textrm{(\ref{eq:case2})}}{=}
\Height_m(\bldc_2)
\]
(since
$\cos ((m{+}1) \beta)
= (\cos \beta) \cos (m \beta) - (\sin \beta) \sin (m \beta)
< (\cos \beta) \cos (m \beta)$).

We conclude that $\bldc_1$ (respectively, $\bldc_2$),
as well as each of its negacyclic shifts, is $m$-extremal
for any odd (respectively, even) $m \in \Int{n{-}1}$, namely,
\fi
\begin{equation}
\ifISIT
    \nonumber
\else
\label{eq:k=2}
\fi
\Height_m(\code^\perp(n))
=
\left\{
\ifISIT
\else
\renewcommand{\arraystretch}{2.0}
\fi
\begin{array}{l@{\quad\;}l}
\ifISIT
    1/\cos ((m{+}1) \beta) ,
\else
\displaystyle
\frac{1}{\cos ((m{+}1) \beta)} ,
\fi
& \textrm{when $m$ is odd} \\
\ifISIT
    (\cos \beta)/\cos ((m{+}1) \beta) , \!\!
\else
\displaystyle
\frac{\cos \beta}{\cos ((m{+}1) \beta)} ,
\fi
& \textrm{when $m$ is even} .
\end{array}
\right.
\end{equation}%
\qed
\end{example}

\ifISIT
    Theorem~\ref{thm:SPC-structural} can be used to characterize
    explicitly the extremal codewords, $\bldc = \bldu G$,
    of the codes $\code_\ico$ and $\code_\dod^\perp$ as well
    (we omit the details).
    For these codes (including the code $\code^\perp(n)$),
    the characterization of the extremal codewords
\else
Interestingly, the height profile of the code $\code^\perp(n)$ in
the example is related to its primal counterpart
in Example~\ref{ex:r=2} by:
\[
\Height_{n-2}(\code^\perp(n)) = \Height_1(\code(n)) + 1 .
\]

\begin{remark}
\label{rem:k=2a}
For many values of $m \in \Int{n{-}1}$,
the code $\code^\perp(n)$ in Example~\ref{ex:k=2}
does not have the smallest $m$-height
among all linear $[n,2]$ codes.
For example, by (possibly puncturing)
the construction in~\cite[\S III.A]{Roth1},
for any $k \in \Int{2:n}$ we get linear $[n,k]$ codes
whose $m$-height is $1$ for any $m \le n/k - 1$;
in particular,
their $(\lfloor n/2 \rfloor {-}1)$-height is $1$ when $k = 2$.

Better codes exist also for larger $m$.
For example, suppose that $n$ is divisible by $4$ and let
$\tilde{\code}$ be the linear $[n,2]$ code
that is obtained from the construction $\code^\perp(n/2)$
by duplicating every coordinate in the codewords of the latter.
For $m = n-3$ (which is odd) we have
\begin{equation}
\label{eq:k=2,m=n-3}
\Height_m(\code^\perp(n))
\stackrel{\mathrm{(\ref{eq:k=2})}}{=}
\frac{1}{\cos ((n{-}2) \pi/(2n))}
= \frac{1}{\sin (\pi/n)} .
\end{equation}
On the other hand,
$\lfloor m/2 \rfloor = n/2 - 2$ is even and, so,
\begin{eqnarray*}
\Height_m(\tilde{\code})
& = &
\Height_{\lfloor m/2 \rfloor}(\code^\perp(n/2)) \\
& \stackrel{\mathrm{(\ref{eq:k=2})}}{=} &
\frac{\cos (\pi/n)}{\cos ((n/2-1) \pi/n)} \\
& = &
\cot (\pi/n) ,
\end{eqnarray*}
which is smaller than~(\ref{eq:k=2,m=n-3}).\qed
\end{remark}

\begin{example}
\label{ex:icosahedron-dual}
Let $\code_\ico^\perp$ be the $[6,3]$ dual of the MDS code $\code_\ico$
in Example~\ref{ex:icosahedron} ($\code_\ico^\perp$ is identical
to $\code_\ico$ up to permutation and negation of coordinates).
Let $G = (\bldg_j)_{j \in \Int{6}}$ be
the $3 \times 6$ generator matrix of $\code_\ico^\perp$ which is given
by the ortho-spherical matrix $H$ in that example
and let $\GG$ be defined as in~(\ref{eq:GG}), namely, $\GG$
consists of the $12$ vertices of a regular icosahedron.

By Theorem~\ref{thm:SPC-structural}, any point
$\bldu \in \Realfield^3$ that corresponds to an $m$-extremal codeword
$\bldu G$ of $\code_\ico^\perp$ must be equidistant from
three linearly independent vertices in $\GG$; equivalently,
the vector $\bldu$ must point toward the circumcenter
of a triangle formed by three vertices of $\GG$,
no two of which are antipodes.
There are $\binom{6}{3} \cdot 2^3 = 160$ ways to select such
triples, yet we will get the same codeword multiple times.
For example, consider some vertex $\pm \bldg_\ell \in \GG$;
its five neighbors in the icosahedron form a regular pentagon.
There are $10$ ways to form a triangle when selecting any three of
these five neighbors, yet these triangles will all have the same
circumcenter (and the same circumscribed circle)
and will therefore result in the same codeword (up to scaling).
In fact, $\bldu$ in this case will point toward the vertex
$\pm \bldg_\ell$ (i.e., it will be a scalar multiple of $\bldg_\ell$).
Moreover, we will get the same set of codewords if we consider instead
the five neighbors of the antipodal vertex $\mp \bldg_\ell$.
After eliminating such replications, we are left with $16$ codewords,
which are all extremal,
and Table~\ref{tab:icosahedron-dual-codewords} shows
their breakdown according to their $m$-extremality.
When describing the properties of
each $m$-extremal codeword $\bldc = \bldu G$ (last column in the table),
we assume that $\bldc$ (or $\bldu$) is scaled
so that $|c_{\pi(m)}| = 1$.
The respective $m$-heights are found in
Table~\ref{tab:icosahedron}.\qed
\end{example}
\begin{table*}[hbt]
\caption{Properties of the $m$-extremal codewords of
the (dual) icosahedral code $\code_\ico^\perp$
in Example~\protect\ref{ex:icosahedron-dual}.}
\label{tab:icosahedron-dual-codewords}
\centering
\tabcolsep1ex
\renewcommand{\arraystretch}{1.5}
\ifIEEE
    \newcommand{\Lift}{\vspace{-2.5ex}}
    \begin{tabular}{ccp{51ex}@{\quad}p{69ex}}
\else
    \vspace{2ex}
    \scriptsize
    \newcommand{\Lift}{\vspace{-6ex}}
    \begin{tabular}{ccp{42ex}@{\quad}p{65ex}}
\fi
\hline\hline
$m$   & \#\ codewords
& \multicolumn{1}{c}{The vector $\bldu$ points toward ---}
& \multicolumn{1}{c}{Properties of $\bldc = \bldu G$} \\
\hline
$1, 2$  & $6$             &
a vertex $\pm \bldg_\ell$ among the $12$ vertices of the icosahedron
(i.e., $\bldu$ is a scalar multiple of $\bldg_\ell$).
&
\Lift
\begin{itemize}
\itemsep0ex
\item
The entry $c_\ell$ has magnitude $\sqrt{5}$.
\item
Five entries have magnitude $1$
(corresponding to the five neighbors of $\pm \bldg_\ell$
in the icosahedron).
\end{itemize}
\\ \hline
$3$  & $10$    &
the circumcenter of a face $\p{F}$ among the $20$ faces of
the icosahedron. It also points toward
the circumcenter of the triangle $\p{T}$ formed
by the three vertices outside $\p{F}$
that are on the three faces which are adjacent to $\p{F}$.
&%
\Lift
\begin{itemize}
\itemsep0ex
\item
Three entries have magnitude $2 + \sqrt{5}$
(corresponding to the three vertices of the face $\p{F}$).
\item
Three entries have magnitude $1$
(corresponding to the three vertices of the triangle $\p{T}$).
\end{itemize}
\\
\hline\hline
\end{tabular}
\end{table*}

\begin{example}
\label{ex:dodecahedron-dual}
Let $\code_\dod^\perp$ be the $[10,3]$ dual code of
the MDS code $\code_\dod$ in Example~\ref{ex:dodecahedron}
and let $G = (\bldg_j)_{j \in \Int{10}}$ be
the $3 \times 10$ generator matrix that is given by
the ortho-spherical matrix $H$ in that example.
Also, let $\GG$ be defined as in~(\ref{eq:GG}), namely, $\GG$
consists of the $20$ vertices of a regular dodecahderon.
Here, there are $\binom{10}{3} \cdot 2^3 = 960$
ways to select triples from $\GG$ that form
nondegenerate triangles, and after pruning replications, we are left
with $121$ codewords, out of which $30$ codewords are not extremal.
The breakdown of the remaining $91$ codewords, according
to their $m$-extremality, is shown in\footnote{%
The description in the last column in the table assumes that
$|c_{\pi(m)}| = 1$, except when $m = 1$, in which case
$|c_{\pi(1)}| = \sqrt{5}$.}
Table~\ref{tab:dodecahedron-dual-codewords},
and the respective $m$-heights are found in
Table~\ref{tab:dodecahedron-dual}.\qed
\end{example}
\begin{table*}[hbt]
\caption{Properties of the $m$-extremal codewords of
the dual dodecahedral code $\code_\dod^\perp$
in Example~\protect\ref{ex:dodecahedron-dual}.}
\label{tab:dodecahedron-dual-codewords}
\centering
\tabcolsep1ex
\renewcommand{\arraystretch}{1.5}
\ifIEEE
    \newcommand{\Lift}{\vspace{-2.5ex}}
    \begin{tabular}{ccp{51ex}@{\quad}p{69ex}}
\else
    \vspace{2ex}
    \scriptsize
    \newcommand{\Lift}{\vspace{-6ex}}
    \begin{tabular}{ccp{42ex}@{\quad}p{65ex}}
\fi
\hline\hline
$m$   & \#\ codewords
& \multicolumn{1}{c}{The vector $\bldu$ points toward ---}
& \multicolumn{1}{c}{Properties of $\bldc = \bldu G$} \\
\hline
$1, 4$  & $10$             &
a vertex $\pm \bldg_\ell$ among the $20$ vertices of the dodecahedron.
&
\vspace{0.5ex}
\Lift
\begin{itemize}
\itemsep0ex
\item
The entry $c_\ell$ has magnitude $3$.
\item
Three entries have magnitude $\sqrt{5}$
(corresponding to the three neighbors of $\pm \bldg_\ell$
in the dodecahedron).
\item
Six entries have magnitude $1$
(corresponding to the neighbors of these three neighbors).
\end{itemize}
\\ \hline
$5, 6$  & $6$    &
the circumcenter of a face $\p{F}$ among the $12$ faces of
the dodecahedron.
&%
\Lift
\begin{itemize}
\itemsep0ex
\item
Five entries have magnitude $2 + \sqrt{5}$
(corresponding to the five vertices of the face $\p{F}$).
\item
Five entries have magnitude $1$
(corresponding to the five neighbors of the vertices of $\p{F}$
that are not on $\p{F}$).
\end{itemize}
\\ \hline
$2$     & $15$             &
the midpoint of an edge $e$ among the $30$ edges of the dodecahedron.
As such, $\bldu$ points toward the circumcenter of
the square $\p{Q}$ formed by the four neighbors
of the two endpoints of $e$, and is also perpendicular to
a plane that passes through two pairs of antipodal vertices.
&%
\Lift
\begin{itemize}
\itemsep0ex
\item
Two entries have magnitude $\varphi = (1 + \sqrt{5})/2$
(corresponding to the two endpoints of the edge $e$).
\item
Four entries have magnitude $1$
(corresponding to the vertices of the square $\p{Q}$).
\item
Two entries have magnitude $1/\varphi$.
\item
Two entries are zero (i.e., $\bldc$ is a minimum-weight codeword).
\end{itemize}
\\ \hline
$3$     & $30$    &
the circumcenter of an isosceles $\Delta \p{A}\p{B}\p{C}$
with a base $\p{B}\p{C}$ which is constructed as follows.
Select a face $\p{F}$ and let $-\p{A}$ be
a vertex of $\p{F}$ (this vertex will be the antipode of $\p{A}$).
Then $\p{B}$ and $\p{C}$ are
the two (adjacent) vertices of $\p{F}$ that are not neighbors of
$-\p{A}$.
&%
\Lift
\begin{itemize}
\itemsep0ex
\item
One entry has magnitude $4 - \sqrt{5}$
(corresponding to the unique vertex that is not a neighbor
of $\p{B}$ or $\p{C}$
on the face that is adjacent to $\p{F}$ along $\p{B}\p{C}$).
\item
Three entries have magnitude $1$.
\item
The magnitudes $-3 + 2 \sqrt{5}$,
$5 - 2 \sqrt{5}$, and $-2 + \sqrt{5}$ appear twice each.
\end{itemize}
\\ \hline
$7$  & $30$    &
the circumcenter of an isosceles $\Delta \p{A}\p{B}\p{C}$
with a base $\p{B}\p{C}$ which is constructed as follows.
Select a face $\p{F}$, let $-\p{A}$ be
a vertex of $\p{F}$, and let $\p{D}$ and $\p{E}$ be
the two neighbors of $-\p{A}$ on $\p{F}$.
Then $\p{B}$ (respectively, $\p{C}$) is the unique neighbor
of $\p{D}$ (respectively, $\p{E}$) that is not on $\p{F}$.
&%
\Lift
\begin{itemize}
\itemsep0ex
\item
Two entries have magnitude $5 + 2 \sqrt{5}$
(corresponding to the antipodes of the two remaining vertices
of $\p{F}$).
\item
One entry has magnitude $4 + \sqrt{5}$.
\item
Three entries have magnitude $1$.
\item
The magnitudes $3 + 2 \sqrt{5}$ and $2 + \sqrt{5}$ appear twice each.
\end{itemize}
\\
\hline\hline
\end{tabular}
\end{table*}

For the codes $\code^\perp(n)$,
$\code_\ico^\perp$, and $\code_\dod^\perp$
in Examples~\ref{ex:k=2}--\ref{ex:dodecahedron-dual},
the characterization of the extremal codewords, $\bldu G$,
\fi
and, in particular, of the height profile of the codes,
can also be obtained directly as follows:
(a)~exploit the symmetry of the geometric objects
to limit the search space for the vectors $\bldu$ to a small domain,
(b)~characterize the order statistics of the entries of
the respective codewords $\bldu G$, and
(c)~determine via critical point analysis the specific vectors
$\bldu$ that correspond to the $m$-extremal codewords.
See~\cite{ZYRSJ} for more details.

\ifISIT
\else
\ifIEEE
   \appendix[Added proofs]
\else
   \section*{$\,$\hfill Appendix: Added proofs\hfill$\,$}
   \appendix
\fi

\begin{proof}[Proof of Lemma~\ref{lem:m=r-1}]
Write $w = \weight(\bldb)$.
Since the function $f$ remains unchanged under
(the same) permutation on the coordinates of $\blda$ and $\bldb$,
we can permute this vectors and assume that
$\Support(\bldb) = \Int{w}$ and that
\[
\frac{a_0}{b_0}
\le
\frac{a_1}{b_1}
\le
\cdots
\le
\frac{a_{w-1}}{b_{w-1}} .
\]
Writing $\rho_j = a_j/b_j$ when $j \in \Int{w}$,
the function $f$ can be written as
\[
f(v) =
\sum_{j \in \Int{w}} |b_j| \cdot |v - \rho_j|
+ \sum_{j \in \Int{w:\ell}} |a_j| .
\]
The derivative of $f(v)$ at its differentiable points is
\[
f'(v) = \sum_{j \in \Int{w}} |b_j| \cdot \sgn \, (v - \rho_j) .
\]
Hence, a minimizing $v$ is given by $\rho_h$, where $h$ is the largest
integer in $\Int{w}$ such that
\[
\sum_{j \in \Int{h}} |b_j|
\le \frac{1}{2} \sum_{j \in \Int{\ell}} |b_j|
= \frac{\| \bldb \|_1}{2}.
\]
At that minimum,
\[
f(\rho_h) = \| \blda - \rho_h \cdot \bldb \|_1 ,
\]
and $\blda - \rho_h \cdot \bldb$ is $0$ at position $h$.
\end{proof}

\begin{proof}[Alternate proof of Corollary~\ref{cor:structural}]
Let $\bldc$ be an $r$-extremal codeword in $\code$
and let $\pi(\cdot)$ be a sorting permutation of $\bldc$;
without loss of generality we assume that $|c_{\pi(r)}| = 1$.
Denoting
\[
\Subset = \left\{ \pi(t) \,:\, t \in \Int{r} \right\} ,
\]
we then have (see~(\ref{eq:codesubset})):
\begin{equation}
\label{eq:cj}
\bldc \in \code\vert_{\overline{\Subset}} .
\end{equation}
Since $\code$ is MDS, there is
a $k \times n$ generator matrix $G$ of $\code$
such that $(G)_{\overline{\Subset}}$ is
the $k \times k$ identity matrix.
Moreover, the remaining columns of $G$,
denoted by $\bldg_j$, $j \in \Subset$, have no zero entries.
    From $\bldc = (\bldc)_{\overline{\Subset}} \, G$
it follows that
\begin{equation}
\label{eq:c0}
c_j = (\bldc)_{\overline{\Subset}} \cdot \bldg_j ,
\quad
j \in \Subset .
\end{equation}

Suppose now to the contrary that
\begin{equation}
\label{eq:cn-1}
|c_{\pi(n-1)}| < 1
\end{equation}
and consider the unique codeword
$\bldx = (x_j)_{j \in \Int{n}} \in \code$ defined by
\begin{equation}
\label{eq:xinformation}
(\bldx)_{\overline{\Subset}} = \sgn \, ( \bldg_{\pi(0)}^\top )
\end{equation}
(with $\sgn(\cdot)$ applied componentwise);
the remaining entries of $\bldx$ are given by
\begin{equation}
\label{eq:xparity}
x_j = (\bldx)_{\overline{\Subset}} \, \bldg_j ,
\quad
j \in \Subset .
\end{equation}
Letting $\theta(\cdot)$ be a sorting permutation of $\bldx$, we have:
\begin{eqnarray*}
|x_{\theta(0)}|
& \ge &
|x_{\pi(0)}| \\
& \stackrel{\textrm{(\ref{eq:xparity})}}{=} &
\bigl| (\bldx)_{\overline{\Subset}} \cdot \bldg_{\pi(0)} \bigr| \\
& \stackrel{\textrm{(\ref{eq:xinformation})}}{=} &
\bigl\| \bldg_{\pi(0)} \bigr\|_1 \\
& \stackrel{\textrm{(\ref{eq:cj})+(\ref{eq:cn-1})}}{>} &
\bigl| (\bldc)_{\overline{\Subset}} \cdot \bldg_{\pi(0)} \bigr| \\
& \stackrel{\textrm{(\ref{eq:c0})}}{=} &
|c_{\pi(0)}| .
\end{eqnarray*}
On the other hand,
\[
|x_{\theta(r)}| \le 1
\]
(since there are no more than $r$ entries in $\bldx$ of
magnitude greater than $1$). We conclude that
\[
\Height_r(\bldx) = \left| \frac{x_{\theta(0)}}{x_{\theta(r)}} \right|
> |c_{\pi(0)}| = \Height_r(\bldc) ,
\]
thereby contradicting the fact that $\bldc$ is $r$-extremal.
\end{proof}

\begin{proof}[Alternate proof of Corollary~\ref{cor:r-height}(i)]
Let $\Subset$ and $i$ be maximizer in~(\ref{eq:r-height1})
and let $\bldc = (c_j)_{j \in \Int{n}}$ be
the unique codeword in $\code$
(and also in $\code\vert_{\overline{\Subset}}$)
that is defined by
\begin{equation}
\label{eq:r-height2}
\bldc = (\bldc)_{\overline{\Subset}} (G)_{\overline{\Subset}}^{-1} G ,
\end{equation}
where
\begin{equation}
\label{eq:r-height3}
(\bldc)_{\overline{\Subset}} =
\sgn \,
\bigl( (G)_{\overline{\Subset}}^{-1} \bldg_i \bigr)^\top
\end{equation}
(since $\code$ is MDS, the entries of the matrix
$\bigl( (G)_{\overline{\Subset}}^{-1} (G)_\Subset \bigr)^\top$
are all nonzero and, therefore, the entries in
$(\bldc)_{\overline{\Subset}}$ are all $\pm 1$).
Letting $\pi(\cdot)$ be a sorting permutation of $\bldc$, we have
$|c_{\pi(r)}| \le 1$ (since there are no more than $r$ entries
in $\bldc$ of magnitude greater than $1$); so,
\[
\Height_r(\bldc)
=
\left| \frac{c_{\pi(0)}}{c_{\pi(r)}} \right| \ge |c_{\pi(0)}| \\
\ge
c_i
\,\stackrel{\textrm{(\ref{eq:r-height2})+(\ref{eq:r-height3})}}{=}\,
\bigl\| (G)_{\overline{\Subset}}^{-1} \bldg_i \bigr\|_1 .
\]
Hence, the right-hand side of~(\ref{eq:r-height1}) is
a lower bound on $\Height_r(\code)$.

Conversely, let $\bldc = (c_j)_{j \in \Int{n}}$ be $r$-extremal
in $\code$ and let $\pi(\cdot)$ be a sorting permutation of $\bldc$.
Denoting
\[
\Subset = \left\{ \pi(t) \,:\, t \in \Int{r} \right\}
\]
and assuming (without loss of generality) that $|c_{\pi(r)}| = 1$,
it follows that for $i = \pi(0)$,
\[
\Height_r(\code) = \Height_r(\bldc) = |c_i|
\stackrel{\textrm{(\ref{eq:r-height2})}}{=}
\bigl| (\bldc)_{\overline{\Subset}}
(G)_{\overline{\Subset}}^{-1} \bldg_i \bigr|
\le
\bigl\| (G)_{\overline{\Subset}}^{-1} \bldg_i \bigr\|_1 ,
\]
which, in turn, is bounded from above
by the right-hand side of~(\ref{eq:r-height1}).
\end{proof}

\begin{proof}[Proof of Lemma~\ref{lem:orthogonal}]
Let $\code$ be an $[n,n{-}r]$ OSPC code over $\Realfield$
with an ortho-spherical $r \times n$ parity-check $H$.
The $r$ rows of $\sqrt{r/n} \cdot H$ form an orthonormal set,
which can be extended by $k = n-r$ vectors into
an orthonormal basis of $\Realfield^n$. Let $G$ be
the $k \times n$ matrix whose rows are these $k$ vectors,
multiplied by $\sqrt{n/k}$.
Then $G \, G^\top = (n/k) \cdot I_k$
and the $n \times n$ matrix
\[
M =
\left(
\renewcommand{\arraystretch}{1.5}
\begin{array}{c}
\sqrt{r/n}   \cdot H \\
\hline
\sqrt{k/n} \cdot G
\end{array}
\right)
\]
is orthonormal, namely,
\begin{equation}
\label{eq:orthogonal}
I_n = M^\top M =
\frac{r}{n} \cdot H^\top H + \frac{k}{n} \cdot G^\top G .
\end{equation}
Looking at the main diagonals of the matrices in~(\ref{eq:orthogonal}),
we obtain
\begin{eqnarray*}
\bldone \; = \; \diag(I_n)
& =  &
\frac{r}{n} \cdot \diag(H^\top H)
+ \frac{k}{n} \cdot \diag(G^\top G) \\
& = &
\frac{r}{n} \cdot \bldone
+ \frac{k}{n} \cdot \diag(G^\top G) ,
\end{eqnarray*}
namely,
\[
\diag(G^\top G)
= \frac{1}{k} \left( n \cdot \bldone - r \cdot \bldone \right)
= \bldone .
\]
\end{proof}
\fi

\end{document}